  \providecommand\BibTeX{{%
    \normalfont B\kern-0.5em{\scshape i\kern-0.25em b}\kern-0.8em\TeX}}}
\newtheorem{theorem}{Theorem}[]
\newtheorem{lemma}[]{Lemma}
\newtheorem{corollary}[]{Corollary}
\theoremstyle{definition}
\newtheorem{definition}[]{Definition}
\title{Report-Sensitive Spot-Checking in Peer-Grading Systems}
\date{}
\author{
  Hedayat Zarkoob \\
  Computer Science Department\\
  University of British Columbia\\
  Vancouver, BC \\
  \texttt{hzarkoob@cs.ubc.ca} \\
   \And
 Hu Fu \\
  Computer Science Department\\
  University of British Columbia\\
  Vancouver, BC \\
  \texttt{hufu@cs.ubc.ca} \\
     \And
 Kevin Leyton-Brown \\
  Computer Science Department\\
  University of British Columbia\\
  Vancouver, BC \\
  \texttt{kevinlb@cs.ubc.ca} \\
}
\newcommand{\AutoAdjust}[3]{\mathchoice{ \left #1 #2  \right #3}{#1 #2 #3}{#1 #2 #3}{#1 #2 #3} }
\newcommand{\Xcomment}[1]{{}}
\newcommand{\InBrackets}[1]{\AutoAdjust{[}{#1}{]}}
\newcommand{\Ex}[2][]{\operatorname{E}_{#1}\InBrackets{#2}}
\newcommand{\Prx}[2][]{\operatorname{Pr}_{#1}\InBrackets{#2}}
\newcommand{\given}{\;\mid\;}
\newcommand{\noaccents}[1]{#1}
\newcommand{\newagentvar}[3][\noaccents]{%
\expandafter\newcommand\expandafter{\csname #2\endcsname}{#1{#3}}%
\expandafter\newcommand\expandafter{\csname #2s\endcsname}{#1{\boldsymbol{#3}}}%
\expandafter\newcommand\expandafter{\csname #2smi\endcsname}[1][i]{#1{\boldsymbol{#3}}_{-##1}}%
\expandafter\newcommand\expandafter{\csname #2i\endcsname}[1][i]{#1{#3}_{##1}}%
\expandafter\newcommand\expandafter{\csname #2ith\endcsname}[1][i]{#1{#3}_{(##1)}}%
}
\newcommand{\sigta}{\sig_{\mathrm{TA}}}
\newcommand{\firstsig}{a}
\newcommand{\secondsig}{b}
\newcommand{\joint}[1]{P_{#1}}
\newcommand{\sigcond}[2]{P_{#1|#2}}
\newcommand{\trueqp}[1]{\Prx{q=#1}}
\newcommand{\conp}[2]{\Prx{\sig_i=#1|q=#2}}
\newcommand{\trueq}[1]{#1}
\newcommand{\spotcheckp}[2]{x_{#1}(#2)}
\newcommand{\spotcheckpx}{x}
\newcommand{\dif}[1]{\bigtriangleup #1}
\begin{document}

\maketitle

\begin{abstract}
Peer grading systems make large courses more scalable, provide students with faster and more detailed feedback, and help students to learn by thinking critically about the work of others.  A key obstacle to the broader adoption of peer grading systems is motivating students to provide accurate grades.  The literature has explored many different approaches to incentivizing accurate grading (which we survey in detail), but the strongest incentive guarantees have been offered by mechanisms that compare peer grades to trusted TA grades with a fixed probability.  In this work, we show that less TA work is required when these probabilities are allowed to depend on the grades that students report. We prove this result in a model with two possible grades, arbitrary numbers of agents, no requirement that students grade multiple assignments, arbitrary but homogeneous noisy observation of the ground truth which students can pay a fixed cost to observe, and the possibility of misreporting grades before or after observing this signal. We give necessary and sufficient conditions for our new mechanism's feasibility, prove its optimality under these assumptions, and characterize its improvement over the previous state of the art both analytically and empirically. Finally, we relax our homogeneity assumption, allowing each student and TA to observe the ground truth according to a different noise model. 


\end{abstract}

%


\maketitle


 \section{Introduction}
 \label{sec:intro}
	
Peer grading has the potential to improve educational outcomes in three main ways: (i) making classes more scalable by offloading some grading work to students, (ii) providing students with faster and more detailed feedback, and (iii) improving student learning by providing opportunities to think critically about the work of others. Various recent implementations of peer grading mechanisms make such systems relatively easy to deploy in practice \citep{wright2015mechanical,de2014crowdgrader,merrifield2009telescope}. 
The broader adoption of such systems faces a common, critical obstacle: motivating students to provide accurate grades. A natural solution is asking multiple students to grade the same assignment and rewarding them based on their behavior (e.g., based on the extent to which their grades agree with the grades given by other students).
Such solutions have been explored in detail in a large literature on \emph{peer prediction}, which considers how to incentivize agents to truthfully disclose unverifiable private information \citep{faltings2012eliciting,prelec2004bayesian,radanovic2014incentives,riley2014minimum,witkowski2012robust,witkowski2013dwelling,jurca2009mechanisms,kong2016putting,jurca2005enforcing,kamble2015truth,radanovic2013robust,shnayder2016informed}. \citet{miller2005eliciting} were the first to introduce peer prediction mechanisms in which truthful declarations constitute a Nash equilibrium. Unfortunately, these mechanisms (and, indeed, many others that were subsequently proposed) also give rise to uninformative equilibria in which agents do not reveal their private information; e.g., all students grading an assignment favorably regardless of its quality \citep{jurca2009mechanisms,witkowski2013dwelling,kong2016putting,shnayder2016informed,dasgupta2013crowdsourced}. Human experiments show that such strategic behavior does arise in practice~\citep{gao2014trick}.

Much subsequent work has attempted to identify peer prediction mechanisms in which either no uninformative equilibrium exists or the truthful equilibrium is always preferred by agents \citep{jurca2009mechanisms,witkowski2013dwelling,kong2016putting,shnayder2016informed}. 
Two examples are particularly notable. First, \citet{dasgupta2013crowdsourced} considered a model in which agents make a binary decision about whether or not to invest costly effort, in the former case observing a noisy signal about the assignment's true value. Agents are paid according to a function that rewards agreement between graders on the same assignment and penalizes correlations in the grades assigned across different assignments. Under this mechanism, truthful reporting yields payoffs that exceed those of any other equilibrium for every agent. Furthermore, if the system contains a small fraction of agents (e.g., TAs) who are always truthful, the truthful equilibrium becomes unique.
Second, \citet{de2016incentives} showed how to achieve unique, truthful equilibria by combining peer prediction with trusted reports in a hierarchical mechanism. 
One drawback of all such approaches is that they only achieve Nash equilibrium implementations, because agents' payoffs depend on other agents' actions, and so agents must reason about each other's behavior. In a classroom setting, where some students will almost surely fail to invest effort, stronger incentives may be required.

\citet{liu2018surrogate} showed that stronger incentives can be guaranteed by a peer prediction mechanism based on surrogate scoring rules, achieving ``uniform dominant strategy incentive compatibility'' in a setting where students make noisy observations about two possible grades. This result relies on there existing sufficiently many tasks per student; more critically, it also requires that students follow the same strategy for all tasks (and be entirely sure that all other students also do so), rather than making the more standard assumption that students make a separate strategic decision every time they are faced with a grading task.

Another approach for strengthening incentive guarantees relies on incorporating trusted graders (TAs) more fundamentally into the mechanism. For example, \citet{goel2019deep} proposed a second uniform dominant strategy mechanism that uses a set of ``golden tasks'' (for which the ground truth is known via an oracle) to incentivize high quality evaluation for a small set of agents. Then 
they use the reports provided on non-golden tasks by the agents in this set to incentivize high quality grading among other agents. Like the mechanism of \citet{liu2018surrogate}, this mechanism relies on (1) the existence of multiple tasks and (2) agents following the same strategy for all tasks; it further requires (3) the determination of ground truth for golden tasks before agents begin grading.

In this work, we follow a different thread of past work \citep{jurca2005enforcing,wright2015mechanical,gao2016incentivizing,wang2018optimal,goel2019deep}, obtaining (fully) dominant strategy mechanisms by guaranteeing that, with sufficiently high probability, each student report is compared to a trusted but noisy observation of ground truth (a TA grade). 
%
Because such ``spot checking'' is expensive (e.g., TAs need to be paid in proportion to the amount of work they do), it is natural to seek to minimize the amount of spot checking required to obtain dominant strategies. This minimization problem was first attacked by \citet{gao2016incentivizing}, who proposed a very simple mechanism that makes truthfulness a dominant strategy by unconditionally rewarding students when they are not spot checked and otherwise penalizing them to the extent that they disagree with the TA. They compared this mechanism with various alternatives based on peer prediction, showing that the latter require strictly more spot checking than the former, even despite the fact that peer-prediction-based mechanisms do not offer dominant strategies.

Gao et al's model always performs spot checks with some fixed probability. It is intuitive to think that \emph{report-sensitive spot checking}---that is, varying the spot-checking probability based on the students' reports---could lower the expected amount of spot checking required overall. For example, imagine that an instructor already knows that a given problem set is extremely difficult. If the reported grades for a given submission are all very high, the instructor might believe that there is an increased likelihood that students have reported dishonestly, and so might want to spot check with a higher probability. 
It turns out to be nontrivial to confirm or refute this intuition, for two key reasons. First, more complex ways of computing spot check probabilities opens the door to new ways for students to manipulate a mechanism. Second, students' interests become intertwined in a new way, since spot-checking probabilities now depend on other agents' strategies. 

Despite these obstacles, this paper (1) identifies the optimal dominant-strategy incentive-compatible (DSIC) report-sensitive spot-checking mechanism; (2) identifies necessary and sufficient conditions for such a mechanism's existence; (3) shows that the new mechanism requires less TA work on expectation than the mechanism of Gao et al; and that (4) the new mechanism exists for a broader range of parameter values. One interesting property of our new mechanism is that it sometimes chooses not to spot check students even when TA assessments are available for the assignments they graded: we show that making full use of TA assessments has the perverse effect of increasing expected workload overall.

Like much other work in the literature [e.g.,~\citealp{dasgupta2013crowdsourced},~\citealp{liu2018surrogate},~\citealp{wang2018optimal}], our analysis is limited to the case where students are asked to report only binary (positive or negative) grades about each assignment. Our new mechanism is general in several other, important senses: it allows for arbitrary numbers of graders per assignment and nearly arbitrary prior probability distributions over both the true grades and the noise models describing the probabilities that students and TAs will observe each signal given the ground truth. In the first part of the paper we assume that these noise models are homogeneous (identical across all students and TAs); in the last section, we generalize our results to the fully heterogeneous case (each agent can grade according to a different noise model).

	 
%

	 
%

Finally, we mention two final strands of recent, related work. \citet{wang2018optimal} proposed a different approach for designing peer grading systems that also varies spot check probabilities. Their model is substantially different from ours, and hence their mechanism is not directly applicable to our setting. Like us, they study strategic students who make a binary decision about whether to invest effort; as we do in the last section of this paper, they assume that agents' noise models are heterogeneous. 
However, they also assume that TAs can directly observe whether a student invested effort, making it simple to ensure that a spot-checked student who invested no effort gets no reward. In contrast, we assume that the TA noisily observes the assignment's grade and is only able to compare this observation with the student's own report, which is either a noisy signal or a misreport; thus, students who invest no effort cannot reliably be identified. \citet{xiao2018incentive} studied incentive mechanisms for peer reviewing in a repeated setting. In their model, students benefit from receiving high quality reviews and decide on the amount of effort to put in when reviewing others. Their proposed mechanism incentivizes high-quality reviewing by rating students' reviewing performance over time and using this rating to match highly ranked reviewers with each other. Key caveats are that the mechanism relies on the ability to accurately score reviewing quality and that high-quality reviewing is incentivized only in equilibrium.

In the following, we first define our model and  
formalize the mechanisms that we study throughout the paper (Section~\ref{sec:model}). We next prove that our proposed mechanism is optimal (Section~\ref{sec:optimality}) and show that it outperforms alternatives (Section~\ref{sec:comparison}) by demonstrating a separation analytically and then quantifying the gap via numerical experiments. Finally, we extend our results to heterogeneous student and TA noise models (Section~\ref{sec:hetero}) and conclude (Section~\ref{sec:conclusion}).

 \section{Model} 
 \label{sec:model}
 \label{sec:spot-checking}

A single assignment\footnote{When multiple assignments must be peer graded, our mechanisms can simply be run in parallel.} needs to be graded by a set $N$ of students (with $|N|=n$) and has an unobservable binary quality 
$q  \in Q= \{\trueq{\firstsig}, \trueq{\secondsig}\} $ drawn from a commonly known distribution $\Prx{q}$.  

Each student~$i$, by exerting effort at cost $c$, can examine the submission and observe a signal~$s_i \in Q$ that is informative about the assignment's quality. More formally, in a way that depends on the true quality~$q$, the signals observed by different students are independently drawn from a single, commonly known distribution $\Prx{s|q}$.
The \emph{ex ante signal distribution} is then
\begin{eqnarray*} 
	\Prx{s=l} = \sum_{t \in Q} \conp{l}{t} \trueqp{t}.
\end{eqnarray*}

We denote by $\joint{\vec l}$ the \emph{ex ante} probability of each agent receiving the signal corresponding to its index in vector~$\vec l$.
By our assumption of conditional independence, this is 
\begin{align}
\joint{\vec l} :=& \Prx{s_1 = l_1, \ldots, s_n = l_n} \nonumber\\
=& \sum_{t \in Q} \trueqp{t} \prod_{j \in N} \Prx{s_j=l_j|q=t}. \label{joint}
\end{align}
Because of conditional independence, any two vectors of reports $\vec l$ and $\vec {l'}$ containing the same numbers of $\firstsig$'s and $\secondsig$'s occur with the same probability: $\joint{\vec l} = \joint{\vec l'}$.  For this reason, we often drop the ordering in the subscript, writing, e.g., $\joint{(\firstsig, \secondsig)} = \joint{(\secondsig, \firstsig)} = \joint{\firstsig \secondsig}$, and similarly for longer vectors. We name the signals so that $\joint{\firstsig} \geq \joint{\secondsig}$ (note that this implies $\joint{\firstsig}- \joint{\firstsig\secondsig}= \joint{\firstsig \firstsig} > \joint{\secondsig \secondsig} = \joint{\secondsig}- \joint{\firstsig\secondsig}$). We also denote by $\sigcond{l}{t}$, the probability that an agent observes signal $l \in Q$ conditioned on another agent observing signal $t \in Q$. 

Besides the students, a teaching assistant (TA) may also receive a signal. Formally,  signal~$\sigta$ is drawn from $\Prx{s|q}$ independently from the students' signals. 

\paragraph{Strategy space} 
In our model, each student faces two strategic choices: whether to expend effort grading the assignment and what grade to report.  Three actions are thus possible: the student (i) may be \emph{truthful}, investing effort to examine the assignment, observing her signal, and reporting this signal; (ii) may invest effort but report a different signal than the one she observed; or (iii) may choose not to invest effort and report an arbitrary signal.  In contrast, the TA is not a strategic agent.  When asked to grade the assignment, the TA always reports an independently observed signal.  

 \subsection{Spot-checking Mechanisms} 
 \label{sec:nonuni}

A focus of our work is on minimizing the need for the TA's input via \emph{spot-checking mechanisms}.
A spot-checking mechanism takes in students' reported signals and decides both whether a TA signal is needed and how much to reward the students.  




\begin{definition}[Spot-checking mechanism]
	\label{def:mech}
	A \emph{spot-checking mechanism} is defined by a tuple $(x_{\firstsig}, x_{\secondsig}, Y)$, where:
	\begin{enumerate}
	    \item $\spotcheckpx_{\firstsig}:\{0\} \cup \mathds{N}\times \mathds{N}\rightarrow[0,1]$ denotes the probability of spot checking an agent who reports $\firstsig$. Given two non negative integers $(k,n)$ specifying the number of $\firstsig$'s reported by the agents and the total number of agents, $\spotcheckpx_{\firstsig}(k,n)$ returns the probability that the mechanism will spot check each agents reporting $\firstsig$. The probability $\spotcheckpx_{\firstsig}(0,n)$ is set to zero.
	    \item $\spotcheckpx_{\secondsig}: \{0\} \cup \mathds{N}\times \mathds{N}\rightarrow[0,1]$ is an analogous function for computing the probability of spot checking an agent who reports $\secondsig$. The first argument remains the total number of agents who report $\firstsig$, not $\secondsig$. The probability $\spotcheckpx_{\secondsig}(n,n)$ is zero.
	    \item $Y: Q \times Q\rightarrow \mathds{R}^+$ denotes the reward given to a student who is spot checked. $Y(r, \sigta)$ is the reward of a spot-checked student who reported $r$ when the TA reported signal $\sigta$. When a student is not spot checked, she receives no reward.
	\end{enumerate}
	The $x$ functions are defined for every $n$ because we require peer grading mechanisms to work for any number of agents. However, when $n$ is obvious from context, we will overload notation and write simply $\spotcheckp{\firstsig}{k}$ and $\spotcheckp{\secondsig}{k}$.
\end{definition}

Throughout this paper we focus on mechanisms where the reward function~$Y$ is the simplest identity function: $$Y(r, \sigta) = \left \{ \begin{array}{cl} R & \text{if }r = \sigta \\ 0 & \text{otherwise,} \end{array} \right.$$
where $R \in \mathds{R}^+$
denotes the reward of matching with the TA signal. This function is called the output agreement reward function; it has been widely studied in the peer prediction literature~\citep{witkowski2013dwelling,waggoner2014output}. 



We model students as having quasilinear utility: i.e., when investing effort and being rewarded $Y$, a student's utility is $Y - c$.  

\begin{definition}[DSIC]
	\label{def:DSIC}
	A spot-checking mechanism is \emph{dominant strategy incentive compatible} (DSIC) if, for each student~$i$ and for any strategies that the other students choose, $i$'s expected utility-maximizing strategy is to be truthful, i.e., to invest effort to observe her signal and to report what she observes.
\end{definition}

The mechanism we will later show to be optimal is DSIC; however, we also define a weaker solution concept (`Incentive Compatible with Conscientious Plays", or ICCP), to allow us to compare our preferred mechanism to a broader set. A mechanism is ICCP if it is DSIC in a simplified strategy space in which students who observe their signals must report them honestly. In other words, ICCP does not allow for the possibility that students could misreport a signal that they invested effort to observe, but continues to allow for lazy students pretending to have observed arbitrary signals.

\begin{definition}[ICCP]
	\label{def:ICCP}
	A spot-checking mechanism is \emph{Incentive Compatible with Conscientious Plays} (ICCP) if, for each student and for any strategy that other students choose, 
	\emph{as long as each student that examines the assignment always reports the observed signal}, the truthful strategy, i.e., to invest effort to examine the assignment and report her observed signal, is expected utility maximizing.
\end{definition}


\begin{definition}[TA workload] \label{TAworkload def}
For a DSIC or ICCP spot-checking mechanism, 
the \emph{TA workload} (or simply the \emph{workload}) is the probability with which the TA needs to provide a signal, assuming all students are truthful:
\begin{multline}
\sum_{t \in Q} \trueqp{t} \sum_{j = 0}^n \binom{n}{j} (\conp{\firstsig}{t})^j \\\cdot(\conp{\secondsig}{t})^{n - j} \max\{\spotcheckp{\firstsig}{j}, \spotcheckp{\secondsig}{j}\}.\label{eq:sc-amount}
\end{multline}
\end{definition}

The expression $\max\{\spotcheckp{\firstsig}{j}, \spotcheckp{\secondsig}{j}\}$ is TA workload needed to ensure that every student reporting $\firstsig$ is spot checked with probability $\spotcheckp{\firstsig}{j}$ and every student reporting $\secondsig$ is spot checked with probability~$\spotcheckp{\secondsig}{j}$. We say that a mechanism is \emph{optimal} for some class if it minimizes TA workload among all mechanisms in that class.

\subsection{ROS, RSS, and RSUS Mechanisms}
We now introduce three families of mechanisms, 
beginning with one studied by \citet{gao2016incentivizing}.  
\begin{definition}[ROS Mechanism]
A \emph{Report-Oblivious Spot-checking (ROS)} mechanism spot checks every student with fixed probability~$x$, regardless of the students' reports.
\end{definition}



The focus of this paper is on \emph{report-sensitive} spot-checking.

\begin{definition}[RSS Mechanism]
A \emph{Report-Sensitive Spot-checking (RSS)} mechanism spot checks every student with probability that can depend on all the students' reports.
\end{definition}

In a RSS mechanism, we determine whether to spot check each student independently: i.e., we sometimes refrain from spot checking one student even when the TA signal is already available from spot checking another student. This seems wasteful: once the TA spot checks one student, it costs them no additional work to use the observed signal to spot check other students too. We now define a class of mechanisms that leverages this fact.

\begin{definition}[RSUS Mechanism]
A \emph{Report-Sensitive, Uniform Spot-checking (RSUS)} mechanism ensures that whenever one student is spot checked, all are spot checked. 
\end{definition}

In RSUS mechanisms, $\forall j \in \{0, \ldots, n\}$, $\spotcheckp{\firstsig}{j} = \spotcheckp{\secondsig}{j}$; of course, it still allows for $j \neq j'$ that $\spotcheckp{\firstsig}{j} \neq \spotcheckp{\firstsig}{j'}$.
A main result of this paper is that RSUS mechanisms can require strictly larger spot-checking budgets than a DSIC RSS mechanism, and are hence strictly suboptimal, not just under the DSIC solution concept but even if they only need to satisfy the weaker ICCP solution concept.  In other words, paradoxically, in order to minimize the overall TA workload, 
it is necessary to commit sometimes \emph{not} to use the TA's signal to spot check {some} students.

 \section{Optimal DSIC Mechanisms}
 \label{sec:optimality}
 We now characterize optimal DSIC mechanisms in both the ROS and RSS families. 

\subsection{Optimal DSIC ROS Mechanisms}

We begin by stating a result about ROS mechanisms from the literature.
Given a prior distribution $(\Pr[q])$ over the assignment's quality, the conditional signal distributions $\Prx{s|q}$, and the cost of the effort needed to examine the assignment and the reward $R$, there might be a minimum spot-checking probability~$\spotcheckpx$ (which is independent of~$n$, the number of students) that guarantees an ROS mechanism to be DSIC.
\begin{theorem} (Consequence of Lemma 1 in [\citealp{gao2016incentivizing}]) \label{uniform optimal} For any signal $\sig \in Q$, and any $c,R > 0$, if $ \joint{\secondsig \secondsig} - \joint{\firstsig \secondsig} \geq \frac{c}{R}$, a DSIC ROS mechanism spot checks a student with probability at least  
	\begin{eqnarray}  \label{pstar}
	\spotcheckpx^* = \frac{\frac{c}{R}}{{\joint{\firstsig \firstsig}}+{\joint{\secondsig \secondsig}- \joint{\firstsig}}} = \frac{\frac{c}{R}}{{\joint{\secondsig \secondsig}}-{\joint{\firstsig \secondsig}}} ;
	\end{eqnarray}
	if 
	 $\joint{\secondsig \secondsig} - \joint{\firstsig \secondsig} < \frac{c}{R}$, there does not exist any DSIC ROS mechanism.
\end{theorem}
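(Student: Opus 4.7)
The plan is to derive the DSIC constraints by enumerating a student's possible deviations from truthfulness in an ROS mechanism and identifying which constraint is tightest. Because an ROS mechanism spot-checks every student with a single fixed probability~$\spotcheckpx$ that does not depend on any report, student~$i$'s expected utility depends only on her own action and the ex ante distribution of the TA's signal. I would write the expected utility for truthful play as $\spotcheckpx R(\joint{\firstsig\firstsig} + \joint{\secondsig\secondsig}) - c$; for investing effort and \emph{flipping} the observed signal (reporting its opposite) as $2\spotcheckpx R\joint{\firstsig\secondsig} - c$; and for the no-effort deviations of reporting a constant $l \in Q$ without looking at the submission as $\spotcheckpx R \joint{l}$. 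Strategies that invest effort yet report a constant signal are strictly dominated by the corresponding no-effort version (same reward, unnecessary cost~$c$), and mixed strategies are convex combinations of pure ones, so no further deviations need to be considered.

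Next I would write down the three remaining DSIC inequalities. Truthful beats ``no effort, report~$\firstsig$'' reduces to $\spotcheckpx R(\joint{\secondsig\secondsig} - \joint{\firstsig\secondsig}) \geq c$; truthful beats ``no effort, report~$\secondsig$'' reduces to $\spotcheckpx R(\joint{\firstsig\firstsig} - \joint{\firstsig\secondsig}) \geq c$; and truthful beats the flip strategy reduces, after the $c$ terms cancel, to $(\joint{\firstsig\firstsig} - \joint{\firstsig\secondsig}) + (\joint{\secondsig\secondsig} - \joint{\firstsig\secondsig}) \geq 0$. Using the paper's convention $\joint{\firstsig} \geq \joint{\secondsig}$ together with $\joint{\firstsig} = \joint{\firstsig\firstsig} + \joint{\firstsig\secondsig}$ and $\joint{\secondsig} = \joint{\secondsig\secondsig} + \joint{\firstsig\secondsig}$, I obtain $\joint{\firstsig\firstsig} \geq \joint{\secondsig\secondsig}$. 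This makes the ``report-$\firstsig$'' constraint the tightest: it implies both the ``report-$\secondsig$'' constraint and the flip constraint (since it already forces one of the two summands in the flip constraint to be at least $c/(\spotcheckpx R) > 0$, while the other is even larger).

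It then remains to extract the minimum spot-checking probability and the feasibility dichotomy from the single binding inequality $\spotcheckpx R(\joint{\secondsig\secondsig} - \joint{\firstsig\secondsig}) \geq c$. If $\joint{\secondsig\secondsig} - \joint{\firstsig\secondsig} \geq c/R$, the minimum feasible value is $\spotcheckpx^\star = (c/R)/(\joint{\secondsig\secondsig} - \joint{\firstsig\secondsig}) \in (0,1]$, matching the stated formula and its equivalent form via $\joint{\firstsig\firstsig} + \joint{\secondsig\secondsig} - \joint{\firstsig} = \joint{\secondsig\secondsig} - \joint{\firstsig\secondsig}$. Otherwise, no $\spotcheckpx \in [0,1]$ satisfies the inequality, so no DSIC ROS mechanism exists. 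I expect the main obstacle to be careful bookkeeping rather than any deep step: one must be sure that every pure strategy (truthful, flip, constant-with-effort, constant-without-effort) has been accounted for and that the labelling convention $\joint{\firstsig} \geq \joint{\secondsig}$ is used correctly to single out which of the two no-effort deviations is binding. Once that is done the formula and dichotomy follow by direct algebra.
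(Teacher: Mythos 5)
Your derivation is correct. The paper does not actually prove this theorem---it imports it as a consequence of Lemma~1 of \citet{gao2016incentivizing}---so your self-contained argument is a genuine addition rather than a rederivation. It is also fully consistent with the machinery the paper builds for its own results: your two no-effort constraints are exactly \eqref{eq:ica-lazy-opponents-simplified} and \eqref{eq:icb-lazy-opponents-simplified} specialized to a constant, report-oblivious $\spotcheckpx_{\firstsig}=\spotcheckpx_{\secondsig}=\spotcheckpx$, and your observation that the misreporting (flip) constraint is implied by the no-effort constraints mirrors the paper's remark that \eqref{eq:misreport-b-a-simplified} and \eqref{eq:misreport-a-b-simplified} follow from \eqref{eq:ica-lazy-opponents-simplified} and \eqref{eq:icb-lazy-opponents-simplified}. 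Your enumeration of pure strategies is exhaustive (identity, flip, constant-with-effort, constant-without-effort), the reduction of DSIC to a single-agent problem is justified because in an ROS mechanism neither the spot-check probability nor the reward distribution depends on others' actions, and the labelling convention $\joint{\firstsig}\geq\joint{\secondsig}$ correctly yields $\joint{\firstsig\firstsig}\geq\joint{\secondsig\secondsig}$, making the ``report $\firstsig$ without effort'' deviation binding. Two cosmetic points: your implication from the binding constraint to the flip constraint presupposes $\spotcheckpx>0$, which you should note follows from $c>0$ (the constraint is violated at $\spotcheckpx=0$); and in the infeasibility case you should distinguish $0<\joint{\secondsig\secondsig}-\joint{\firstsig\secondsig}<\frac{c}{R}$ (which would force $\spotcheckpx>1$) from $\joint{\secondsig\secondsig}-\joint{\firstsig\secondsig}\leq 0$ (where no $\spotcheckpx\geq 0$ works), though both lead to the same conclusion. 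Neither is a gap.
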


Recall that $\joint{\firstsig \secondsig}$ is the \emph{ex ante} probability that, when two signals $s_1$ and $s_2$ are drawn independently from $\Prx{s|q}$, $s_1 = \firstsig$ and $s_2 = \secondsig$. The assumption $\joint{\secondsig \secondsig} - \joint{\firstsig \secondsig} \geq \frac{c}{R}$ is necessary and sufficient to ensure that the ROS mechanism is DSIC if spot checks are performed with probability 1: a student who is always spot checked prefers to be truthful than to report either signal without effort. 
We refer to $\spotcheckpx^*$ as the optimal DSIC ROS solution and the ROS mechanism that spot checks with probability~$\spotcheckpx^*$ as the optimal ROS mechanism. 

Equation~\eqref{pstar} demonstrates that TA workload of the optimal ROS mechanism falls as the reward--cost ratio $R/c$ increases. 
However, both costs (effort required) and rewards (e.g., portion of a course grade devoted to performing peer review) are often fixed in practical peer grading scenarios. We thus take the approach of characterizing which (fixed) values of $R/c$ give rise to feasible spot-checking mechanisms and of investigating how spot-checking policies can be tuned to minimize TA workload.

\subsection{Optimal DSIC RSS Mechanisms}

We begin by defining a class of mechanisms and then go on to prove our first main result: that a mechanism from this class minimizes TA workload across all DSIC RSS mechanisms. 

\begin{definition}[PRSS Mechanism]
A \emph{Personal-Report-Sensitive Spot-checking mechanism}, or \emph{PRSS mechanism}, spot checks each student with a probability that only depends on the student's own report.
\end{definition}
In PRSS mechanisms, the functions $\spotcheckpx_\firstsig$ and $\spotcheckpx_\secondsig$ are constant, i.e., for $ \forall j,j' \in \{0,\dots,n\}$ and $j \neq j'$, $\spotcheckp{\firstsig}{j} = \spotcheckp{\firstsig}{j'}$ and $\spotcheckp{\secondsig}{j} = \spotcheckp{\secondsig}{j'}$.  A remarkable feature of PRSS mechanisms is that, if $\spotcheckpx_{\firstsig}(k) \neq \spotcheckpx_{\secondsig}(k)$ for some~$k$, the TA's input is sometimes ``wasted'', in the sense that it is not used to spot check every student.  To see this, say if $\spotcheckpx_{\secondsig}(k) > \spotcheckpx_{\firstsig}(k)$, then the mechanism should consult the TA with probability $\spotcheckpx_{\secondsig}(k)$ but spot checks a student reporting~$\firstsig$ with probability only~$\spotcheckpx_{\firstsig}(k)$.


Our first result is a characterization of the optimal DSIC RSS mechanism.  We show that this optimum is achieved by a PRSS mechanism with appropriate spot-checking probabilities. We also characterize the necessary and sufficient condition for the existence of DSIC RSS mechanisms. Later we show that this condition is weaker than the one for DSIC ROS mechanisms in Theorem~\ref{weaker}.

\begin{theorem} \label{main}
	\label{thm:opt-RSS}
	For any signal $\sig_i \in Q$, and any $c,R > 0$, if  $\sigcond{\firstsig}{\firstsig} - \joint{\firstsig} \geq \frac{c}{R}$, 
	a PRSS mechanism with the following spot-checking probabilities is the DSIC spot-checking mechanism that minimizes the TA workload among all other DSIC spot-checking mechanisms:
	\begin{align}
	\spotcheckp{\firstsig}{k} &=  \frac{  \frac{c}{R} }{\sigcond{\secondsig}{\secondsig} - \joint{\secondsig}} \ \ \ \forall k \in \{1,\dots,n\}   \label{opt1} \\
	\spotcheckp{\secondsig}{k} &=  \frac{  \frac{c}{R}}{\sigcond{\firstsig}{\firstsig} - \joint{\firstsig}} \ \ \ \forall k \in \{0,\dots,n-1\}; \label{opt2}
	\end{align}
	if
	$\sigcond{\firstsig}{\firstsig} - \joint{\firstsig} < \frac{c}{R} $,
	there does not exist any DSIC spot-checking mechanism.
\end{theorem}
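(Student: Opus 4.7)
The plan is to first verify that the proposed PRSS mechanism (with $\spotcheckp{\firstsig}{k}$ and $\spotcheckp{\secondsig}{k}$ given by \eqref{opt1} and \eqref{opt2}) is DSIC, then match its workload with a lower bound that applies to every DSIC mechanism; the feasibility clause falls out of the same analysis. Throughout, let $\Delta := \joint{\firstsig}\joint{\secondsig} - \joint{\firstsig\secondsig}$; direct expansion gives $\Delta = \joint{\firstsig\firstsig} - \joint{\firstsig}^2 = \joint{\secondsig\secondsig} - \joint{\secondsig}^2$ and $\joint{\firstsig\firstsig}\joint{\secondsig\secondsig} - \joint{\firstsig\secondsig}^2 = \Delta$, so the PRSS probabilities rewrite as $(c/R)\joint{\secondsig}/\Delta$ and $(c/R)\joint{\firstsig}/\Delta$, and the hypothesis $\sigcond{\firstsig}{\firstsig} - \joint{\firstsig} \geq c/R$ is equivalent to $(c/R)\joint{\firstsig}/\Delta \leq 1$.

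To show the PRSS is DSIC, observe that because its spot-check probability depends only on $i$'s own report, the other students' strategies are irrelevant to $i$'s expected utility, so DSIC collapses to the two ``don't go lazy'' inequalities $\spotcheckp{\secondsig}{k}\joint{\secondsig\secondsig} - \spotcheckp{\firstsig}{k}\joint{\firstsig\secondsig} \geq c/R$ and $\spotcheckp{\firstsig}{k}\joint{\firstsig\firstsig} - \spotcheckp{\secondsig}{k}\joint{\firstsig\secondsig} \geq c/R$, together with the analogous misreport inequalities. Substituting the closed-form values and applying the $\Delta$ identity turns both lazy inequalities into equalities; the misreport inequalities then follow because $c/R>0$.

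For the optimality lower bound, the key move is to derive, for each $k \in \{0, \ldots, n-1\}$, a two-variable constraint on the pair $(\spotcheckp{\firstsig}{k+1}, \spotcheckp{\secondsig}{k})$ by fixing an \emph{uninformative deterministic} adversarial profile for the other students: $k$ of them always report $\firstsig$ and the remaining $n-1-k$ always report $\secondsig$, regardless of their signals. Under this profile, $K_{-i} = k$ with probability one and independently of $q$, so the problem from $i$'s perspective reduces to a one-student problem with spot-check probabilities $\spotcheckp{\firstsig}{k+1}$ and $\spotcheckp{\secondsig}{k}$, and DSIC-for-$i$ yields exactly
\begin{align*}
\joint{\secondsig\secondsig}\spotcheckp{\secondsig}{k} - \joint{\firstsig\secondsig}\spotcheckp{\firstsig}{k+1} &\geq c/R, \\
\joint{\firstsig\firstsig}\spotcheckp{\firstsig}{k+1} - \joint{\firstsig\secondsig}\spotcheckp{\secondsig}{k} &\geq c/R.
\end{align*}
Solving this 2-variable system via the $\Delta$ identity forces $\spotcheckp{\firstsig}{k+1} \geq (c/R)\joint{\secondsig}/\Delta$ and $\spotcheckp{\secondsig}{k} \geq (c/R)\joint{\firstsig}/\Delta$ for every $k$. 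Since $\joint{\firstsig} \geq \joint{\secondsig}$ implies $(c/R)\joint{\firstsig}/\Delta \geq (c/R)\joint{\secondsig}/\Delta$, the workload summand at each $j \in \{0, \ldots, n-1\}$ is at least $(c/R)\joint{\firstsig}/\Delta$ (using $\spotcheckp{\secondsig}{j} \geq (c/R)\joint{\firstsig}/\Delta$) and the summand at $j = n$ is at least $(c/R)\joint{\secondsig}/\Delta$; summing gives a lower bound that matches the PRSS workload exactly. The infeasibility clause then drops out: if $\sigcond{\firstsig}{\firstsig} - \joint{\firstsig} < c/R$, then $(c/R)\joint{\firstsig}/\Delta > 1$, but the bound $\spotcheckp{\secondsig}{0} \geq (c/R)\joint{\firstsig}/\Delta$ cannot be met by any valid probability.

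The hardest part, I anticipate, is choosing the right adversarial opponent profiles. The natural first guess---assume the other students are truthful---produces strictly weaker necessary conditions; one can in fact construct mechanisms that beat the PRSS's workload while satisfying ``DSIC against truthful opponents'' yet that fail DSIC against a single lazy-$\secondsig$-reporting opponent. The uninformative deterministic profiles above are precisely what decouple the DSIC requirement across different values of $k$ and make the per-$k$ two-variable LP reduction go through cleanly.
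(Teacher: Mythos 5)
Your proposal is correct and follows essentially the same route as the paper: relax DSIC to the ``truthful beats lazy reporting when all opponents report deterministically without effort'' constraints, note that these decouple into $n$ independent two-variable systems in $(\spotcheckp{\firstsig}{k+1},\spotcheckp{\secondsig}{k})$, identify the intersection point as the optimum, and then use the personal-report-sensitivity of the resulting mechanism to verify all remaining DSIC constraints (with infeasibility when $\sigcond{\firstsig}{\firstsig}-\joint{\firstsig}<c/R$ following because the required probability would exceed $1$). The one cosmetic difference is that where the paper argues optimality via local optimality of the intersection point in a convex program, you take nonnegative combinations of the two constraints to get explicit pointwise lower bounds $\spotcheckp{\firstsig}{k+1}\geq (c/R)\joint{\secondsig}/\Delta$ and $\spotcheckp{\secondsig}{k}\geq (c/R)\joint{\firstsig}/\Delta$, which is a slightly cleaner and strictly stronger way to close the same gap.
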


This result is nontrivial.  Recall that an RSS mechanism has the power to vary the spot-checking probabilities for each student based on the reports of all other students, and such side information conceivably could help reduce the TA workload while maintaining the students' incentives.  This turns out not to be the case.  One immediate consequence of Theorem~\ref{thm:opt-RSS} is that the TA workload \emph{increases} with the number of students grading each assignment.
 
The proof consists of two steps.  In Step~$1$, we reason about a convex optimization problem that minimizes the workload but relaxes all of the DSIC constraints except those that incentivize the truthful strategy when all other students make no effort. We show that the intersection point of the non-trivial constraints in this optimization problem is locally optimal, and hence, also, the global optimal of the defined optimization problem.  Then, in Step~$2$, we show that this solution in fact gives rise to a DSIC RSS mechanism.  Since it is an optimal solution with most DSIC constraints relaxed, it is also optimal when one enforces all constraints.


\begin{proof}
	\noindent
	\textbf{Step~1.} In a DSIC mechanism, fixing a student~$i$, if all students other than~$i$ report a signal without expending effort, $i$ should be incentivized to be truthful, i.e., to invest effort and to report the signal observed.  Suppose $k$ other students report~$\firstsig$: then, the utility of student~$i$ reporting signal~$\firstsig$ without investing effort is $R \cdot \Prx{\sigta = \firstsig} \cdot \spotcheckp{\firstsig}{k+1}$, whereas her utility for being truthful is 
	\begin{multline*}
	R \cdot (	\Prx{\sig_i=\firstsig} \cdot \Prx
		{\sigta = \firstsig \given s_i = \firstsig}  \spotcheckp{\firstsig}{k+1} \\
	+ 	\Prx{\sig_i=\secondsig} \cdot \Prx 
	{\sigta = \secondsig \given s_i = \secondsig} \spotcheckp{\secondsig}{k} ) - c.
	\end{multline*}
We therefore should have that, for $k \in \{0, \ldots, n - 1\}$,
	\begin{multline}
	\Prx{\sig_i=\firstsig} \cdot \Prx
	{\sigta = \firstsig \given s_i = \firstsig}  \spotcheckp{\firstsig}{k+1}  \\
	+ \Prx{\sig_i=\secondsig} \cdot \Prx
	{\sigta = \secondsig \given s_i = \secondsig} \spotcheckp{\secondsig}{k} \\
	- \Prx{\sigta = \firstsig} \spotcheckp{\firstsig}{k+1}	\geq  \frac{c}{R}.
	\label{eq:ica-lazy-opponents}
	\end{multline}
	Similarly we should have for $k \in \{0, \ldots, n - 1\}$ that
	\begin{multline}
		\Prx{\sig_i=\firstsig} \cdot \Prx{\sigta = \firstsig \given s_i = \firstsig}  \spotcheckp{\firstsig}{k+1} \\
	+ 	\Prx{\sig_i=\secondsig} \cdot \Prx{\sigta = \secondsig \given s_i = \secondsig} \spotcheckp{\secondsig}{k}  \\
	 - \Prx{\sigta = \secondsig} \spotcheckp{\secondsig}{k}	\geq  \frac{c}{R}.
		\label{eq:icb-lazy-opponents}
	\end{multline}
	
	Simplifying \eqref{eq:ica-lazy-opponents} and \eqref{eq:icb-lazy-opponents}, we have 
		\begin{eqnarray}
	-{\joint{\firstsig \secondsig }}     \spotcheckp{\firstsig}{k+1} + {\joint{\secondsig \secondsig }}    \spotcheckp{\secondsig }{k}  \geq \frac{c}{R}. \label{eq:ica-lazy-opponents-simplified} \\
	{\joint{\firstsig \firstsig }}     \spotcheckp{\firstsig}{k+1} - {\joint{\firstsig \secondsig }}    \spotcheckp{\secondsig }{k}  \geq \frac{c}{R}. \label{eq:icb-lazy-opponents-simplified}
	\end{eqnarray}

	
	Consider minimizing the TA workload subject to \eqref{eq:ica-lazy-opponents-simplified} and~\eqref{eq:icb-lazy-opponents-simplified}: 
	\begin{multline*}
		\min_{\substack{\spotcheckpx_{\firstsig}, \spotcheckpx_{\secondsig}}}  \sum_{t \in Q} \trueqp{t} \sum_{j = 0}^n \binom{n}{j} (\conp{\firstsig}{t})^j \\\cdot (\conp{\secondsig}{t})^{n - j} \max\{\spotcheckp{\firstsig}{j}, \spotcheckp{\secondsig}{j}\}. 
	\end{multline*}
	with $0 \leq \spotcheckpx_{\firstsig}(k), \spotcheckpx_{\secondsig}(k) \leq 1$ for all $k \in \{0, \cdots, n\}$.  The value of this optimization problem is an upper bound to the workload of any DSIC mechanism, since it relaxed all DSIC constraints except \eqref{eq:ica-lazy-opponents} and \eqref{eq:icb-lazy-opponents}.  Note also that the objective function is convex in the variables, due to the presence of the $\max$ functions, and its feasible region is a convex polytope since all constraints are linear.  There are $2n$ variables: $x_{\secondsig}(0), x_{\firstsig}(1), x_{\secondsig}(1), x_{\firstsig}(2), \ldots, x_{\firstsig}(n-1), x_{\secondsig}(n-1), x_{\firstsig}(n)$.  (Note that $x_{\firstsig}(0)$ and $x_{\secondsig}(n)$ are zero.)  These variables can be grouped into $n$~pairs, with $x_{\secondsig}(k)$ and $x_{\firstsig}(k+1)$ as a pair for each~$k$.  Each pair is subject to a pair of constraints from \eqref{eq:ica-lazy-opponents-simplified} and~\eqref{eq:icb-lazy-opponents-simplified} with the corresponding~$k$, and there is no constraint governing variables from different pairs.  This means the optimization problem is separated into $n$~subproblems, one for each pair of variables.  We now claim that the optimal solution for each subproblem is given by forcing \eqref{eq:ica-lazy-opponents-simplified} and~\eqref{eq:icb-lazy-opponents-simplified} to be tight, which gives us the following expressions:
	\begin{align}
	 \spotcheckp{\firstsig}{k} &=  \frac{c}{R} \cdot \frac{  \joint{\secondsig}}{\joint{\firstsig \firstsig}\joint{\secondsig \secondsig }-(\joint{ \firstsig \secondsig})^2} \ \ \ \forall k \in \{1,\dots,n\} \label{long-opt1} \\
	\spotcheckp{\secondsig}{k} &=  \frac{c}{R}\cdot \frac{  \joint{\firstsig }}{\joint{\firstsig \firstsig }\joint{\secondsig \secondsig }-(\joint{\firstsig\secondsig })^2} \ \ \ \forall k \in \{0,\dots,n-1\}; \label{long-opt2}     
	\end{align}
	Simplifying \eqref{long-opt2}, we get
	\begin{align*}
	 \spotcheckp{\secondsig}{k} &=  \frac{c}{R}\cdot \frac{\joint{\firstsig}}{\joint{\firstsig \firstsig }\joint{\secondsig \secondsig }-(\joint{\firstsig\secondsig })^2} = \frac{c}{R}\cdot \frac{\joint{\firstsig}}{\joint{\firstsig \firstsig }\joint{\secondsig \secondsig } - \joint{\firstsig\secondsig } (\joint{\firstsig}- \joint{ \firstsig \firstsig})} \\
	   &= \frac{c}{R}\cdot \frac{\joint{\firstsig}}{\joint{\firstsig \firstsig }\joint{\secondsig}-\joint{\firstsig}\joint{\firstsig\secondsig }} = \frac{c}{R}\cdot \frac{\joint{\firstsig}}{\joint{\firstsig \firstsig }(1-\joint{\firstsig})-\joint{\firstsig}\joint{\firstsig\secondsig }}  \\
	   &=  \frac{\frac{c}{R}}{\frac{\joint{\firstsig \firstsig }}{\joint{\firstsig }} - \joint{\firstsig}} 
	   = \frac{\frac{c}{R}}{\sigcond{\firstsig}{\firstsig} - \joint{\firstsig}} ,
	\end{align*}
	which is the expression in \eqref{opt2}. We get the expression in ~\eqref{opt1} by simplifying \eqref{long-opt1}, similarly. 
	
	We first check that these solutions are feasible: we have that $\joint{\firstsig} > \joint{\secondsig}$ in~\eqref{long-opt1} and~\eqref{long-opt2}, therefore, if $\sigcond{\firstsig}{\firstsig} - \joint{\firstsig} \geq \frac{c}{R}$, the spot-checking probabilities are indeed between $0$ and~$1$.
	To see that these solutions are optimal, we invoke the convexity of the problem, which means we only need to argue that the solutions are locally optimal.  At the point where \eqref{eq:ica-lazy-opponents-simplified} and~\eqref{eq:icb-lazy-opponents-simplified} are tight for a given~$k$, increasing $x_{\firstsig}(k+1)$ forces $x_{\secondsig}(k)$ to increase, and vice versa, in order for the pair to keep being feasible --- this is a consequence of the coefficients' signs in \eqref{eq:ica-lazy-opponents-simplified} and~\eqref{eq:icb-lazy-opponents-simplified} --- but it is not feasible to decrease both variables. Therefore the only local move within the feasible region is to increase both variables.  However, both variables have positive coefficients in the objective function.  Therefore the solution in \eqref{opt1} and~\eqref{opt2} is both locally optimal and globally optimal for our convex program.
	 
	 We note that if $\sigcond{\firstsig}{\firstsig} - \joint{\firstsig} < \frac{c}{R} $, to make sure that Constraints \eqref{eq:ica-lazy-opponents-simplified} and~\eqref{eq:icb-lazy-opponents-simplified} are all satisfied for $k \in \{0, \ldots, n - 1\}$, at least one of $x_{\secondsig}(k)$ or $x_{\firstsig}(k+1)$ has to be strictly greater than one. This is also a consequence of the coefficients' signs in \eqref{eq:ica-lazy-opponents-simplified} and~\eqref{eq:icb-lazy-opponents-simplified}. However, $0 \leq \spotcheckpx_{\firstsig}(k), \spotcheckpx_{\secondsig}(k) \leq 1$ for all $k \in \{0, \cdots, n - 1\}$. Therefore, there does not exist any feasible DSIC RSS mechanism.

	\vspace{0.03in}
	\noindent
	\textbf{Step~2.} We now show that the spot-checking probabilities given in \eqref{opt1} and~\eqref{opt2} in fact give rise to a DSIC mechanism.  We only need to check the validity of the DSIC constraints not included in the convex program above.
	
	
	We first check the condition that a student having spent the effort to get a signal should be incentivized to report the observation faithfully.   
	We need for $k \in \{0, \ldots, n - 1\}$,
	\begin{multline}
		 \Prx{\sigta = \firstsig \given s_i = \firstsig}  \spotcheckp{\firstsig}{k+1}  \\
		 \geq 
	 \Prx{\sigta = \secondsig \given s_i = \firstsig} \spotcheckp{\secondsig}{k}, 
	\end{multline}
	and 
	\begin{multline}
		 \Prx{\sigta = \secondsig \given s_i = \secondsig}  \spotcheckp{\secondsig}{k}  \\
		 \geq 
	 \Prx{\sigta = \firstsig \given s_i = \secondsig} \spotcheckp{\firstsig}{k+1},
	\end{multline}
	which simplify to
\begin{eqnarray}
	{\joint{\firstsig \firstsig}}     \spotcheckp{\firstsig}{k+1} - {\joint{\firstsig \secondsig }}    \spotcheckp{\secondsig}{k}  \geq 0.  \label{eq:misreport-b-a-simplified}\\
	-{\joint{\firstsig \secondsig }}     \spotcheckp{\firstsig}{k+1} + {\joint{\secondsig \secondsig }}    \spotcheckp{\secondsig}{k}   \geq 0. \label{eq:misreport-a-b-simplified}
	\end{eqnarray}
	Note that \eqref{eq:misreport-b-a-simplified} and \eqref{eq:misreport-a-b-simplified} are in fact implied by \eqref{eq:ica-lazy-opponents-simplified} and~\eqref{eq:icb-lazy-opponents-simplified}.  
	
	For the other DSIC constraints that concern a student's utility when some other students may spend effort to observe a signal, note that the spot-checking probabilities given by \eqref{opt1} and~\eqref{opt2} are independent of~$k$, i.e., they are independent from what the other students report, and therefore the corresponding mechanism is PRSS.  In such a mechanism, a student's utility is independent from the other students' strategies.  If the truthful strategy maximizes a student's utility when no other student spends any effort, it still does when other students spend effort.  Therefore the mechanism obtained from Step~$1$ is indeed DSIC, and this completes the proof.
	\end{proof}


 The following theorem shows that for a given set of input distributions, while the optimal DSIC RSS mechanism might be feasible, there might not exist any DSIC ROS mechanism.

\begin{theorem} \label{weaker}
For any signal $s \in Q$, and any $c,R > 0$, whenever there is a DSIC ROS mechanism, there is a DSIC RSS mechanism; however, when $\joint{\firstsig} > \joint{\secondsig}$, there exist $c,R > 0$ for which $\joint{\secondsig\secondsig} - \joint{\firstsig\secondsig} < \frac{c}{R} \leq \sigcond{\firstsig}{\firstsig} - \joint{\firstsig}  $, i.e., the optimal DSIC RSS mechanism exists while DSIC ROS mechanisms do not. 
\end{theorem}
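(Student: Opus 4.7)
Both claims reduce to an algebraic comparison between the feasibility thresholds of the two mechanism classes. By Theorem~\ref{uniform optimal}, a DSIC ROS mechanism exists precisely when $c/R \leq \joint{\secondsig\secondsig}-\joint{\firstsig\secondsig}$, and by Theorem~\ref{main}, a DSIC RSS mechanism exists precisely when $c/R \leq \sigcond{\firstsig}{\firstsig}-\joint{\firstsig}$. So the first part of the theorem follows if I can show the inequality
\[
\joint{\secondsig\secondsig}-\joint{\firstsig\secondsig} \;\leq\; \sigcond{\firstsig}{\firstsig}-\joint{\firstsig},
\]
and the second part follows by showing this inequality is strict and has positive slack whenever $\joint{\firstsig}>\joint{\secondsig}$, so that a suitable $c/R$ can be inserted between the two thresholds.

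\textbf{Key computation.} I would express everything in terms of $\joint{\firstsig}$ and $\joint{\firstsig\secondsig}$ using the identities $\joint{\firstsig\firstsig} = \joint{\firstsig}-\joint{\firstsig\secondsig}$, $\joint{\secondsig} = 1-\joint{\firstsig}$, and $\joint{\secondsig\secondsig} = \joint{\secondsig}-\joint{\firstsig\secondsig}$. The ROS threshold becomes $\joint{\secondsig}-2\joint{\firstsig\secondsig}$, while the RSS threshold is
\[
\sigcond{\firstsig}{\firstsig}-\joint{\firstsig} \;=\; \frac{\joint{\firstsig\firstsig}-\joint{\firstsig}^2}{\joint{\firstsig}} \;=\; \joint{\secondsig}-\frac{\joint{\firstsig\secondsig}}{\joint{\firstsig}}.
\]
Subtracting, the gap collapses to
\[
\bigl(\sigcond{\firstsig}{\firstsig}-\joint{\firstsig}\bigr) - \bigl(\joint{\secondsig\secondsig}-\joint{\firstsig\secondsig}\bigr) \;=\; \joint{\firstsig\secondsig}\cdot\frac{2\joint{\firstsig}-1}{\joint{\firstsig}}.
\]

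\textbf{Conclusion of both parts.} The naming convention $\joint{\firstsig}\geq\joint{\secondsig}$ together with $\joint{\firstsig}+\joint{\secondsig}=1$ gives $\joint{\firstsig}\geq 1/2$, so the right-hand side is nonnegative and part~1 follows. For part~2, the assumption $\joint{\firstsig}>\joint{\secondsig}$ upgrades this to $\joint{\firstsig}>1/2$, making the factor $(2\joint{\firstsig}-1)/\joint{\firstsig}$ strictly positive; combined with the non-degeneracy $\joint{\firstsig\secondsig}>0$ (which holds whenever each signal is observed with positive probability under both qualities, i.e.\ the signal is not perfectly informative), the gap is a positive-length interval. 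Choosing any $c,R>0$ with $c/R$ in the half-open interval $(\joint{\secondsig\secondsig}-\joint{\firstsig\secondsig},\,\sigcond{\firstsig}{\firstsig}-\joint{\firstsig}]$ then witnesses a parameter regime in which the RSS mechanism is feasible but no ROS mechanism is.

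\textbf{Anticipated obstacle.} There is no deep obstacle; the only subtle point is to verify the non-degeneracy $\joint{\firstsig\secondsig}>0$ cleanly. If the statement is meant to cover all non-trivial priors, I would briefly note that $\joint{\firstsig\secondsig}=0$ forces $\conp{\firstsig}{t}\conp{\secondsig}{t}=0$ for every $t$ with $\trueqp{t}>0$, making the signal a deterministic function of the quality; in that degenerate case both thresholds coincide at $\joint{\secondsig}$, and the statement should be read as referring to the generic case. Otherwise the proof is simply the algebraic identity above followed by a direct appeal to the two characterizations from Theorems~\ref{uniform optimal} and~\ref{main}.
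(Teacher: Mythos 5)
Your proof is correct and follows essentially the same route as the paper's: both reduce the claim to comparing the two feasibility thresholds and showing their difference equals $\joint{\firstsig\secondsig}\,(\joint{\firstsig\firstsig}-\joint{\secondsig\secondsig})/\joint{\firstsig}$ (your factor $2\joint{\firstsig}-1$ is exactly $\joint{\firstsig\firstsig}-\joint{\secondsig\secondsig}$), which is nonnegative by the naming convention and strictly positive when $\joint{\firstsig}>\joint{\secondsig}$. You are in fact slightly more careful than the paper in flagging the non-degeneracy condition $\joint{\firstsig\secondsig}>0$, which the paper's strict inequality also tacitly requires.
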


\begin{proof}
	 By Theorem~\ref{main}, we know that for any signal $\sig_i \in Q$ and any $c,R > 0$, if $\frac{c}{R} \geq \sigcond{\firstsig}{\firstsig} - \joint{\firstsig} = \frac{\joint{\firstsig \firstsig }\joint{\secondsig \secondsig }-(\joint{\firstsig\secondsig })^2}{\joint{\firstsig}}$, then the optimal DSIC RSS mechanism exists. Therefore, we only need to show that  $\frac{\joint{\firstsig \firstsig }\joint{\secondsig \secondsig }-(\joint{\firstsig\secondsig })^2}{\joint{\firstsig}}  > \joint{\secondsig\secondsig} - \joint{\firstsig\secondsig} $, i.e., there is a gap between these expressions. We can write
\begin{align*}
	\joint{\firstsig \firstsig} \cdot \joint{\secondsig \secondsig} - \joint{\firstsig\secondsig}^2  &>   \joint{\firstsig \firstsig} \cdot \joint{\secondsig \secondsig} - \joint{\firstsig\secondsig}^2 - \joint{\firstsig\secondsig} \cdot ( \joint{\firstsig \firstsig}- \joint{\secondsig \secondsig} ) \\
	&= (\joint{\firstsig \firstsig} + \joint{\firstsig \secondsig}) (\joint{\secondsig \secondsig} - \joint{\firstsig \secondsig}) = \joint{\firstsig} \cdot (\joint{\secondsig \secondsig} - \joint{\firstsig \secondsig}),
	\end{align*}	
	where the first inequality is dues to the fact that $ \joint{\firstsig \firstsig} > \joint{\secondsig \secondsig} $. 

\end{proof}

 \section{Comparing Mechanisms}
 \label{sec:comparison}

In this section, we compare our optimal DSIC RSS mechanism with alternatives both analytically and experimentally.

 \subsection{Analytic Comparisons}
 Drawing on our characterization from Section~\ref{sec:optimality}, we can now compare the workloads required by the optimal DSIC ROS, RSS and RSUS mechanisms.
%
We begin by comparing the TA workload of the optimal DSIC RSS mechanism with that of the optimal DSIC ROS mechanism. 

\begin{corollary} \label{RSS saving}
When $\joint{\firstsig} > \joint{\secondsig}$ and $\joint{\secondsig\secondsig} - \joint{\firstsig\secondsig} > \frac{c}{R} $, the TA workload of the optimal DSIC ROS mechanism exceeds that of the optimal DSIC RSS mechanism by at least $ \frac{c}{R} \cdot \frac{\joint{\secondsig}}{\joint{\firstsig\firstsig}\joint{\secondsig\secondsig}-(\joint{\firstsig\secondsig})^2}.$
\end{corollary}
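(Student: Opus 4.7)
The plan is to compute both workloads directly and lower-bound their gap.

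By Theorem~\ref{uniform optimal}, the optimal DSIC ROS mechanism spot checks every student with the uniform probability $\spotcheckpx^* = \frac{c/R}{\joint{\secondsig\secondsig} - \joint{\firstsig\secondsig}}$, so its TA workload equals $\spotcheckpx^*$: a single coupled TA query covers every student. By Theorem~\ref{main}, the optimal DSIC RSS mechanism is a PRSS with constant per-report probabilities, which I will denote $\spotcheckpx_\firstsig = \frac{c/R}{\sigcond{\secondsig}{\secondsig} - \joint{\secondsig}}$ and $\spotcheckpx_\secondsig = \frac{c/R}{\sigcond{\firstsig}{\firstsig} - \joint{\firstsig}}$. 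A preliminary step is the identity $\joint{\firstsig\firstsig}\joint{\secondsig\secondsig} - \joint{\firstsig\secondsig}^2 = \joint{\firstsig}\joint{\secondsig} - \joint{\firstsig\secondsig} = \joint{\secondsig\secondsig} - \joint{\secondsig}^2 = \joint{\firstsig\firstsig} - \joint{\firstsig}^2$, a consequence of $\joint{\firstsig} + \joint{\secondsig} = 1$. Using it, $\spotcheckpx_\firstsig$ rewrites as $\frac{(c/R)\,\joint{\secondsig}}{\joint{\firstsig\firstsig}\joint{\secondsig\secondsig} - \joint{\firstsig\secondsig}^2}$, which is exactly the stated lower bound, and $\spotcheckpx_\secondsig = \frac{(c/R)\,\joint{\firstsig}}{\joint{\firstsig\firstsig}\joint{\secondsig\secondsig} - \joint{\firstsig\secondsig}^2}$. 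Since $\joint{\firstsig} \geq \joint{\secondsig}$, we have $\spotcheckpx_\secondsig \geq \spotcheckpx_\firstsig$.

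Next I substitute these probabilities into Definition~\ref{TAworkload def}. Because $\spotcheckp{\firstsig}{0} = \spotcheckp{\secondsig}{n} = 0$ and $\spotcheckpx_\secondsig \geq \spotcheckpx_\firstsig$, the pointwise max in the workload formula is $\spotcheckpx_\secondsig$ for every $j \in \{0, \ldots, n-1\}$ and is $\spotcheckpx_\firstsig$ only at $j = n$. Collapsing the binomial sums leaves RSS workload $= \spotcheckpx_\secondsig - (\spotcheckpx_\secondsig - \spotcheckpx_\firstsig) \sum_{t \in Q} \trueqp{t}\,\conp{\firstsig}{t}^n$, which is at most $\spotcheckpx_\secondsig$.

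Hence ROS minus RSS is at least $\spotcheckpx^* - \spotcheckpx_\secondsig$. I expect the main obstacle to be the final algebraic step of showing $\spotcheckpx^* - \spotcheckpx_\secondsig \geq \spotcheckpx_\firstsig$, equivalently (after clearing common factors using the identity above) $\joint{\firstsig\secondsig}(\joint{\firstsig} - \joint{\secondsig}) \geq \joint{\secondsig}(\joint{\secondsig\secondsig} - \joint{\firstsig\secondsig})$. Establishing this requires using both $\joint{\firstsig} \geq \joint{\secondsig}$ and the feasibility hypothesis $\joint{\secondsig\secondsig} - \joint{\firstsig\secondsig} > c/R$; the bookkeeping here is the most delicate part of the argument.
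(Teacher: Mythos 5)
Your proposal follows the same route as the paper's own proof: both observe that the RSS workload is a convex combination of $\spotcheckpx_{\firstsig}$ and $\spotcheckpx_{\secondsig}$ weighted by the all-$\firstsig$ event, lower-bound the saving by $\spotcheckpx^* - \spotcheckpx_{\secondsig}$, and then try to establish $\spotcheckpx^* - \spotcheckpx_{\secondsig} \geq \spotcheckpx_{\firstsig}$. Everything you write up to that point is correct, including the identity $\joint{\firstsig\firstsig}\joint{\secondsig\secondsig}-\joint{\firstsig\secondsig}^2 = \joint{\firstsig}\joint{\secondsig}-\joint{\firstsig\secondsig}$ and the collapsed form of the RSS workload.

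The step you defer as ``delicate bookkeeping'' is, however, a genuine gap, and it cannot be closed along this route. Your reduced inequality $\joint{\firstsig\secondsig}(\joint{\firstsig}-\joint{\secondsig}) \geq \joint{\secondsig}(\joint{\secondsig\secondsig}-\joint{\firstsig\secondsig})$ simplifies, using $\joint{\secondsig\secondsig}=\joint{\secondsig}-\joint{\firstsig\secondsig}$ and $\joint{\firstsig}+\joint{\secondsig}=1$, to $\joint{\firstsig\secondsig}\geq \joint{\secondsig}^2$, i.e.\ $\sigcond{\secondsig}{\secondsig}\leq\joint{\firstsig}$; this is not implied by the hypotheses $\joint{\firstsig}>\joint{\secondsig}$ and $\joint{\secondsig\secondsig}-\joint{\firstsig\secondsig}>c/R$. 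For instance, with $\trueqp{\firstsig}=0.6$ and perfectly informative signals one has $\joint{\firstsig\secondsig}=0<\joint{\secondsig}^2=0.16$ and $\spotcheckpx^*=\spotcheckpx_{\secondsig}=2.5\,c/R$, so the actual saving is $\joint{\firstsig}(\spotcheckpx^*-\spotcheckpx_{\firstsig})=\frac{c}{R}\cdot\frac{\joint{\firstsig}-\joint{\secondsig}}{\joint{\secondsig}}=0.5\,c/R$, strictly below the claimed bound $\spotcheckpx_{\firstsig}\approx 1.67\,c/R$ (accurate but imperfect graders with a mildly skewed prior give the same failure). You should be aware that the paper's proof makes exactly the same jump --- it asserts $\frac{1}{\joint{\secondsig\secondsig}-\joint{\firstsig\secondsig}}-\frac{\joint{\firstsig}}{\joint{\firstsig\firstsig}\joint{\secondsig\secondsig}-\joint{\firstsig\secondsig}^2}\geq\frac{\joint{\secondsig}}{\joint{\firstsig\firstsig}\joint{\secondsig\secondsig}-\joint{\firstsig\secondsig}^2}$ as following from $\joint{\firstsig}\geq\joint{\secondsig}$ and $\joint{\firstsig\firstsig}\geq\joint{\firstsig\secondsig}$, but that claim reduces to the same $\joint{\firstsig\secondsig}\geq\joint{\secondsig}^2$ and does not follow from those facts. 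So your instinct about where the difficulty lies was exactly right, but what remains is not bookkeeping: the quantitative bound in the corollary does not follow from this argument (the qualitative statement that RSS saves a strictly positive amount does survive in these examples, just not by the stated margin).
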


\begin{proof}  
By Theorem~\ref{uniform optimal}, the optimal DSIC ROS solution is $\frac{c}{R} \cdot \frac{1}{\joint{\secondsig \secondsig}-\joint{\firstsig \secondsig}}$. Let $\Vec{\firstsig}$ be a vector of length $n$ where all signals are $\firstsig$. By~\eqref{long-opt1}, this vector is be spot checked with probability $\spotcheckp{\firstsig}{n}$. By~\eqref{long-opt1} and ~\eqref{long-opt2},  $\spotcheckp{\secondsig}{k} > \spotcheckp{\firstsig}{k}$ for any $k \in \{0,\dots,n-1\}$ (recall that $\spotcheckp{\secondsig}{n}= \spotcheckp{\firstsig}{0} =0$). Therefore, $\max\{\spotcheckp{\firstsig}{j}, \spotcheckp{\secondsig}{j}\}= \spotcheckp{\secondsig}{j}$ for any vector other than $\Vec{\firstsig}$. Hence, the TA workload saved by the RSS mechanism is:

	\begin{align} 
	&	 \left(\joint{\Vec{\firstsig}} \left(\spotcheckpx^* - \frac{c}{R} \cdot \frac{ \joint{\secondsig}}{\joint{\firstsig \firstsig}\joint{\secondsig \secondsig}-(\joint{\firstsig \secondsig})^2} \right) \right. \label{Rss-saving}\\
	&\left.  + \left(1- \joint{\Vec{\firstsig}} \right) \left(\spotcheckpx^* - \frac{c}{R} \cdot \frac{ \joint{\firstsig}}{\joint{\firstsig \firstsig}\joint{\secondsig \secondsig}-(\joint{\firstsig \secondsig})^2} \right) \right)   \nonumber\\
	&\geq \frac{c}{R} \cdot \left( \frac{1}{\joint{\secondsig \secondsig}-\joint{\firstsig \secondsig}} -  \frac{ \joint{\firstsig}}{\joint{\firstsig \firstsig}\joint{\secondsig \secondsig}-(\joint{\firstsig \secondsig})^2} \right)  \nonumber\\
	&\geq  \frac{c}{R} \cdot \frac{\joint{\secondsig}}{\joint{\firstsig \firstsig}\joint{\secondsig \secondsig}-(\joint{\firstsig \secondsig})^2}, \nonumber
    	\end{align}
	where the first and second inequalities are due to the facts that $\joint{\firstsig} \geq \joint{\secondsig}$ and $\joint{\firstsig \firstsig} \geq \joint{\firstsig \secondsig}$.
\end{proof}

\begin{corollary} \label{RSS saving2}
	When 
	$\joint{\firstsig}= \joint{\secondsig}$ and $\joint{\secondsig\secondsig} - \joint{\firstsig\secondsig} > \frac{c}{R} $, the TA workload of the optimal DSIC RSS mechanism is equal to that of the optimal DSIC ROS mechanism.
\end{corollary}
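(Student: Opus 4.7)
The plan is to show that when $\joint{\firstsig}=\joint{\secondsig}$, the optimal RSS spot-checking probabilities collapse to a single constant that coincides exactly with the optimal ROS probability $\spotcheckpx^*$ from Theorem~\ref{uniform optimal}. Once both mechanisms spot-check every student with the same constant probability on every realization of reports, the TA workload in Definition~\ref{TAworkload def} must coincide, establishing the corollary.

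First I would observe that the hypothesis $\joint{\firstsig}=\joint{\secondsig}$ together with $\joint{\firstsig}=\joint{\firstsig\firstsig}+\joint{\firstsig\secondsig}$ and $\joint{\secondsig}=\joint{\secondsig\secondsig}+\joint{\firstsig\secondsig}$ implies $\joint{\firstsig\firstsig}=\joint{\secondsig\secondsig}$, and hence $\sigcond{\firstsig}{\firstsig}=\joint{\firstsig\firstsig}/\joint{\firstsig}$ equals $\sigcond{\secondsig}{\secondsig}=\joint{\secondsig\secondsig}/\joint{\secondsig}$. Plugging into the formulas \eqref{opt1} and \eqref{opt2} of Theorem~\ref{main} then yields
\[
\spotcheckp{\firstsig}{k}=\spotcheckp{\secondsig}{k}=\frac{c/R}{\sigcond{\firstsig}{\firstsig}-\joint{\firstsig}}
\]
for every applicable $k$, so the RSS mechanism is in fact an RSUS mechanism with a single constant probability.

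Next I would show that this common value equals the optimal ROS probability. Using $\joint{\firstsig}=1/2$ and $\joint{\firstsig\firstsig}=\joint{\secondsig\secondsig}$, a short calculation gives $\sigcond{\firstsig}{\firstsig}-\joint{\firstsig}=2\joint{\secondsig\secondsig}-1/2=\joint{\secondsig\secondsig}-\joint{\firstsig\secondsig}$, where the last equality uses $\joint{\firstsig\secondsig}=\joint{\firstsig}-\joint{\firstsig\firstsig}=1/2-\joint{\secondsig\secondsig}$. Comparing with Theorem~\ref{uniform optimal}, the RSS spot-checking probability is exactly $\spotcheckpx^*$.

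Finally, because under truthful reporting the RSS mechanism spot-checks every student with probability $\spotcheckpx^*$ regardless of the reports, its TA workload $\sum_t \trueqp{t}\sum_j\binom{n}{j}(\conp{\firstsig}{t})^j(\conp{\secondsig}{t})^{n-j}\max\{\spotcheckp{\firstsig}{j},\spotcheckp{\secondsig}{j}\}$ simplifies to $\spotcheckpx^*$, which is identical to the workload of the optimal DSIC ROS mechanism. I do not foresee a serious obstacle: the only slightly subtle step is recognizing that $\joint{\firstsig}=\joint{\secondsig}$ forces $\joint{\firstsig\firstsig}=\joint{\secondsig\secondsig}$, after which the algebra is routine substitution.
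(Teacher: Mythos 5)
Your proof is correct and takes essentially the same approach as the paper: both rest on the observation that $\joint{\firstsig}=\joint{\secondsig}$ forces $\joint{\firstsig\firstsig}=\joint{\secondsig\secondsig}$, so the two RSS spot-checking probabilities coincide and the mechanism degenerates to a constant-probability one. The only (cosmetic) difference is that you show the common probability equals $\spotcheckpx^*$ directly, whereas the paper computes the difference of the two workloads and verifies algebraically that it is zero.
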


\begin{proof}	
The optimal DSIC ROS solution is $\frac{\frac{c}{R}}{\joint{\secondsig \secondsig}-\joint{\firstsig \secondsig}}$. Also, when $\joint{\firstsig}= \joint{\secondsig}$, $\joint{\firstsig}-\joint{\firstsig \secondsig}=\joint{\firstsig \firstsig}=\joint{\secondsig}-\joint{\firstsig \secondsig}=\joint{\secondsig \secondsig}$, and the spot-checking probabilities defined by~\eqref{opt1} and~\eqref{opt2} are equal. Thus, the TA workload that the RSS mechanism saves is:
	\begin{align*}
	& \frac{c}{R} \cdot \left(\frac{1}{\joint{\secondsig \secondsig}-\joint{\firstsig \secondsig}} - \frac{\joint{\secondsig}}{\joint{\firstsig \firstsig}\joint{\secondsig \secondsig}-(\joint{\firstsig \secondsig})^2}\right)  \\
	& \ = \frac{\joint{\firstsig \firstsig}\joint{\secondsig \secondsig}-(\joint{\firstsig \secondsig})^2- (\joint{\secondsig \secondsig}+\joint{\firstsig \secondsig})(\joint{\secondsig \secondsig}-\joint{\firstsig \secondsig})}{(\joint{\secondsig \secondsig}-\joint{\firstsig \secondsig})(\joint{\firstsig \firstsig}\joint{\secondsig \secondsig}-(\joint{\firstsig \secondsig})^2)} = 0.
	\end{align*}
\end{proof}


We now turn to comparing RSUS and RSS mechanisms. Because RSUS mechanisms are special cases of RSS mechanisms, it is obvious that the former perform weakly worse than the latter, all else being equal. Our second nontrivial main result shows something stronger: that the optimal RSS mechanism is {strictly} better than all RSUS mechanisms, even when the latter are strengthened by being required to satisfy only the ICCP solution concept.  The proof is based on a linear program that lower bounds the workload of the optimal RSUS mechanism under the ICCP constraints. We show that this workload is still greater than that of the optimal RSS mechanism. Recall that RSUS mechanisms spot check all students whenever the TA is consulted.  Our result thus implies that the TA workload can be decreased by choosing not to spot check certain students even when the TA spot checks other students.

\begin{theorem}
	\label{thm:RSS-RSUS}
	When $\joint{\firstsig} > \joint{\secondsig}$ and $\sigcond{\firstsig}{\firstsig} - \joint{\firstsig} \geq \frac{c}{R} $, the optimal RSS under the DSIC solution concept spot checks strictly less than the optimal RSUS mechanism under the ICCP solution concept.
\end{theorem}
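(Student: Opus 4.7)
The plan is to lower bound the workload of any ICCP RSUS mechanism by the optimum of a relaxed convex program that uses only a subset of the ICCP constraints, and then show this lower bound strictly exceeds the optimal DSIC RSS workload characterized in Theorem~\ref{main}.

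Following Step~1 in the proof of Theorem~\ref{main}, I would restrict attention to the ICCP constraints that arise when every other student exerts no effort. For an RSUS mechanism, $\spotcheckp{\firstsig}{k} = \spotcheckp{\secondsig}{k}$ for every $k$; writing this common value as $x(k)$, the simplifications that produced \eqref{eq:ica-lazy-opponents-simplified} and \eqref{eq:icb-lazy-opponents-simplified} yield, for each $k \in \{0,\ldots,n-1\}$,
\begin{align*}
\joint{\secondsig\secondsig}\, x(k) - \joint{\firstsig\secondsig}\, x(k+1) &\geq \frac{c}{R},\\
\joint{\firstsig\firstsig}\, x(k+1) - \joint{\firstsig\secondsig}\, x(k) &\geq \frac{c}{R}.
\end{align*}
These are necessary for any ICCP RSUS mechanism, so minimizing \eqref{eq:sc-amount} subject only to this pair for each $k$, together with $x(j) \in [0,1]$, gives a valid lower bound on the optimal ICCP RSUS workload.

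For each $k$, I would extract per-coordinate lower bounds by positive linear combinations of the pair. Multiplying the first constraint by $\joint{\firstsig\firstsig}$ and the second by $\joint{\firstsig\secondsig}$ and summing cancels $x(k+1)$ and yields $x(k) \geq \frac{c}{R} \cdot \frac{\joint{\firstsig}}{\joint{\firstsig\firstsig}\joint{\secondsig\secondsig} - \joint{\firstsig\secondsig}^2}$, which is precisely the optimal RSS value $\spotcheckp{\secondsig}{k}$ from \eqref{opt2}. Multiplying instead by $\joint{\firstsig\secondsig}$ and $\joint{\secondsig\secondsig}$ cancels $x(k)$ and gives $x(k+1) \geq \spotcheckp{\firstsig}{k+1}$ from \eqref{opt1}. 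Because $\joint{\firstsig} > \joint{\secondsig}$ implies $\spotcheckp{\secondsig}{j} > \spotcheckp{\firstsig}{j}$, the binding bound on $x(j)$ coincides exactly with the optimal RSS per-vector workload $\max\{\spotcheckp{\firstsig}{j}, \spotcheckp{\secondsig}{j}\}$ that appears in \eqref{eq:sc-amount}; weighting by positive binomial factors and summing already yields the weak inequality RSUS workload $\geq$ RSS workload.

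To sharpen this into a strict inequality, I would observe that the bound $x(k) \geq \spotcheckp{\secondsig}{k}$ is tight only when both inequalities of pair~$k$ are tight (the combining coefficients are strictly positive), which forces $x(k+1) = \spotcheckp{\firstsig}{k+1} < \spotcheckp{\secondsig}{k+1}$. For $k \leq n-2$, pair $(k+1)$ requires $x(k+1) \geq \spotcheckp{\secondsig}{k+1}$, a contradiction. Hence at every feasible point of the relaxed program (in particular its optimum), $x(k) > \spotcheckp{\secondsig}{k}$ strictly for each $k \in \{0,\ldots,n-2\}$, and the positive binomial weights in \eqref{eq:sc-amount} turn this into a strictly positive gap in the total workload. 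The main obstacle is precisely this last step: the pair-wise linear combinations by themselves only produce the weak inequality, and obtaining strictness requires chaining the asymmetry $\joint{\firstsig} > \joint{\secondsig}$ across consecutive pairs to rule out the RSS workload as a feasible value of the relaxed program.
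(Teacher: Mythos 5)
Your proof is correct, and it takes a genuinely different route from the paper's. The paper solves the relaxed RSUS linear program explicitly: it starts from the optimal ROS solution $\spotcheckpx^*$, descends along a carefully chosen direction with geometrically decaying steps $\dif{\spotcheckpx(k)}$, argues which constraints bind first, and then invokes local optimality of the resulting vertex to conclude $x(n)=\spotcheckp{\firstsig}{n} \leq x(n-1)=\spotcheckp{\secondsig}{n-1} < x(n-2) < \dots < x(0)$. You instead never solve the program: you take the strictly positive combinations of each pair of constraints (weights $\joint{\firstsig\firstsig},\joint{\firstsig\secondsig}$ and $\joint{\firstsig\secondsig},\joint{\secondsig\secondsig}$) to extract the per-coordinate lower bounds $x(k)\geq \spotcheckp{\secondsig}{k}$ and $x(k+1)\geq\spotcheckp{\firstsig}{k+1}$ -- essentially an LP-duality certificate -- which immediately gives the weak inequality coordinatewise against the RSS workload, and then obtain strictness by tightness propagation: equality in the combined bound forces both constraints of pair $k$ tight (the determinant $\joint{\firstsig\firstsig}\joint{\secondsig\secondsig}-\joint{\firstsig\secondsig}^2 = \joint{\firstsig}(\sigcond{\firstsig}{\firstsig}-\joint{\firstsig})>0$ under the hypothesis, so the $2\times 2$ system pins down $x(k+1)=\spotcheckp{\firstsig}{k+1}$), which contradicts the bound from pair $k+1$ because $\joint{\firstsig}>\joint{\secondsig}$ makes $\spotcheckp{\secondsig}{k+1}>\spotcheckp{\firstsig}{k+1}$. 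Your version is cleaner and more rigorous where the paper's is informal (e.g., the ``binds faster'' step), and it pinpoints exactly where the strict gap lives (indices $k\leq n-2$), matching the paper's conclusion that only $x(n)$ and $x(n-1)$ can equal the RSS values. One shared caveat: both arguments produce a strict gap only when $n\geq 2$ (for $n=1$ there is no index $k\leq n-2$, and indeed the RSUS restriction is then vacuous), so the theorem implicitly assumes multiple graders.
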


\textit{Remark:}	It is straightforward to see that in the case of $\joint{\firstsig} = \joint{\secondsig}$, the TA workload of the optimal ICCP RSUS mechanism is equal to the optimal DSIC RSS and ROS mechanisms.

\begin{proof}
	The RSUS mechanisms are special cases of the RSS mechanisms, where $\spotcheckp{\firstsig}{k}=\spotcheckp{\secondsig}{k}= \spotcheckpx(k)$ for $k \in \{0,\dots,n\}$. We prove this theorem by showing that the optimal solution of an adapted version of the RSS convex optimization problem for the RSUS mechanisms is strictly worse than the optimal DSIC RSS solution defined by~\eqref{opt1} and \eqref{opt2}. Once  $\spotcheckp{\firstsig}{k}=\spotcheckp{\secondsig}{k}$, for any student $i$, we get that 
	\begin{multline}
	\min_{\substack{\spotcheckpx}}    \sum_{t \in Q} \trueqp{t} \sum_{j = 0}^n \binom{n}{j} (\conp{\firstsig}{t})^j\\ \cdot (\conp{\secondsig}{t})^{n - j} \spotcheckpx(k) \label{obj00} 
	\end{multline}
	subject to:	
    \begin{align}
 	   - &  {\joint{\firstsig \secondsig}}     \spotcheckpx(k+1) + {\joint{\secondsig \secondsig}}    \spotcheckpx(k)  \geq \frac{c}{R} \ \  \forall k \in \{0,\dots, n-1\}  \label{120} \\
	     & {\joint{\firstsig \firstsig}}     \spotcheckpx(k+1) - {\joint{\firstsig \secondsig}}    \spotcheckpx(k)  \geq \frac{c}{R}    \ \ \forall k \in \{0,\dots, n-1\} \label{121} \\
	    - & {\joint{\firstsig \secondsig}}     \spotcheckpx(k+1) + {\joint{\secondsig \secondsig}}    \spotcheckpx(k)  \geq 0 \ \ \forall k \in \{0,\dots, n-1\}  \label{1200} \\
	     & {\joint{\firstsig \firstsig}}     \spotcheckpx(k+1) - {\joint{\firstsig \secondsig}}    \spotcheckpx(k)  \geq 0   \ \ \forall k \in \{0,\dots, n-1\} \label{1211} \\
	   & 0 \leq \spotcheckpx(k) \leq 1 \ \ \ \ \ \ \ \ \ \ \ \ \ \ \ \ \ \ \ \ \ \ \ \ \ \ \forall k \in \{0,\dots,n\}
    \end{align} 
	The derivation of Constraints~\eqref{120}--\eqref{1211} is similar to that of Constraints\eqref{eq:ica-lazy-opponents-simplified}--\eqref{eq:misreport-a-b-simplified} in the proof of Theorem~\ref{main}. Observe that our optimization problem is now a linear program and that Constraints~\eqref{1200} and~\eqref{1211} are implied by Constraints~\eqref{120} and~\eqref{121}; thus, they can be removed without changing the optimal solution. 

	We start from the optimal DSIC ROS solution $\spotcheckpx^*$ and decrease  $\spotcheckpx(n)$ by $\dif{\spotcheckpx(n)}=  \frac{c}{R} \cdot  \frac{{\joint{\firstsig}}-{\joint{\secondsig}}}{{\joint{\secondsig \secondsig}}-{\joint{\firstsig \secondsig}}}  \cdot  \frac{{\joint{\secondsig \secondsig}}}{{\joint{\firstsig \firstsig}}{\joint{\secondsig \secondsig}}-(\joint{\firstsig \secondsig})^2}$ and the rest of the spot checking probabilities by
	\begin{eqnarray} \label{rate}
	\dif{\spotcheckpx(k)} = \left(\frac{{\joint{\secondsig \secondsig}}}{{\joint{\firstsig \secondsig}}}\right)      \dif{\spotcheckpx(k-1)}, \ \ \forall k \in \{0,\dots,n\}. \label{rate1}
	\end{eqnarray} 
	At $\spotcheckpx^*$, Constraint \eqref{120} becomes tight, following directly from our construction of the step size given by \eqref{rate1}. Constraint \eqref{121} is tight when $k=n-1$; the rest of the constraints are easily satisfied. Hence, we only need to show that Constraint \eqref{121} will not be violated by the specified decrease in the decision variables. 
	However, by~\eqref{rate1} we get that for $k \in \{0,\dots,n-1\}$, the ratios by which the gap between the left and right hand sides of Constraints~\eqref{121} decreases is $\left(\frac{{\joint{\firstsig \secondsig}}}{{\joint{\secondsig \secondsig}}}\right)^{(n-k)} \leq 1$  times the ratio by which the gap for the constraint corresponding to $n=k-1$ is decreasing. Therefore, the constraint corresponding to $n=k-1$ binds faster. 
	
	
	As a result, since Constraint \eqref{120} and \eqref{121} when $k=n-1$ are all binding, we can not decrease any of the decision variables anymore without increasing another. Since every objective coefficient is positive, increasing $\spotcheckpx(k)$ for any $k \in \{0,\dots,n\}$ could be beneficial only if it resulted in decreasing the value of the rest of the decision variables, achieving an overall objective function improvement. However, if $\spotcheckpx(k+1)$ is increased then $\spotcheckpx(k)$ needs to increase as well to preserve feasibility; if $\spotcheckpx(k)$ increases then $\spotcheckpx(k-1)$ needs to increase as well; and so on. Thus, overall, there exist no local, objective-improving changes to the current values of $\spotcheckpx(k)$, and so we have identified an optimal solution to our linear program. Finding the intersection point of the binding constraints shows that $x(n)=\spotcheckpx_{\firstsig}(k) \leq x(n-1)=\spotcheckpx_{\secondsig}(k) < x(n-2),\dots, < x(0)$, where $\spotcheckpx_{\firstsig}(k)$ and $\spotcheckpx_{\secondsig}(k)$ are defined by~\eqref{opt1} and \eqref{opt2} for any $k \in \{0,\dots,n\}$. This leads to the result of the theorem statement.\footnote{Note that we actually prove more than claimed in the theorem statement: our result holds not only for the ICCP solution concept, but also for all strategies in which no students invest  grading effort.}
\end{proof}


 \subsection{Experimental Comparisons} 
 \label{sec:numerical}
 In this section, we use numerical experiments to quantify performance differences between the optimal DSIC RSS mechanism and the optimal DSIC ROS mechanism. For simplicity we consider $\conp{\firstsig}{\firstsig} = \conp{\secondsig}{\secondsig}$ (other settings yield qualitatively similar results but have more parameters over which to optimize).

First, we consider the case where each assignment is graded by three students,  as occurs in various practical peer grading systems \cite{wright2015mechanical,de2014crowdgrader}. Observe that TA workload can be computed given three quantities: $\frac{R}{c}$, the factor by which the reward exceeds the cost of effort; $\Prx{q}$, the prior over an assignment's true quality; and $\Prx{s|q}$, the prior over the grader's signal given a true quality. Given these quantities, let the \emph{scaled RSS workload} be the per-assignment TA workload required by the optimal DSIC RSS mechanism, expressed as a fraction of the per-assignment TA workload required by the optimal DSIC ROS mechanism. (Note that a scaled RSS workload of 1 means that RSS offers no benefits over ROS, and scaled RSS workloads approaching 0 correspond to  TA workloads under RSS that approach a 0\% fraction of those under ROS.) To investigate the scaled RSS workload empirically, we varied $\Prx{s|q}$ in 10\% increments from 0.6 (the signal gives little information about the true grade) to 1 (the signal perfectly identifies the true grade) and then set $\Prx{q}$ to {minimize} scaled RSS workload.\footnote{
    We already know that from Corollary~\ref{RSS saving2} that for every value of $\Prx{s|q}$ there exists a value of $\Prx{q}$ under which the scaled RSS workload is 1: the $\Prx{q}$ for which $\sum_q \Prx{s|q} \cdot \Prx{q}$ is uniform. Furthermore, note that for any fixed value of $\Prx{q}$, scaled RSS workload is unaffected by changes to $R/c$ (consult Equations \eqref{pstar}, \eqref{opt1}, and~\eqref{opt2}). However, Theorem~\ref{weaker} shows that as $\frac{R}{c}$ increases, a wider range of $\Prx{q}$ and $\Prx{s|q}$ distributions become feasible, allowing for smaller scaled RSS workloads. We thus focus on characterizing the smallest RSS workload that can be achieved for given values of $R/c$ and $\Prx{s|q}$.} 
The result is shown in Figure~\ref{fig:saving}. We make two key observations. First, scaled RSS workload starts out at 1 when rewards are similar to costs, but falls towards zero as $R/c$ increases. Second, holding $R/c$ constant, scaled RSS workload decreases as graders become more accurate (i.e., as $\Prx{s|q}$ approaches 1).

\begin{figure}[t]
	\centering
	\includegraphics[scale=0.38 ]{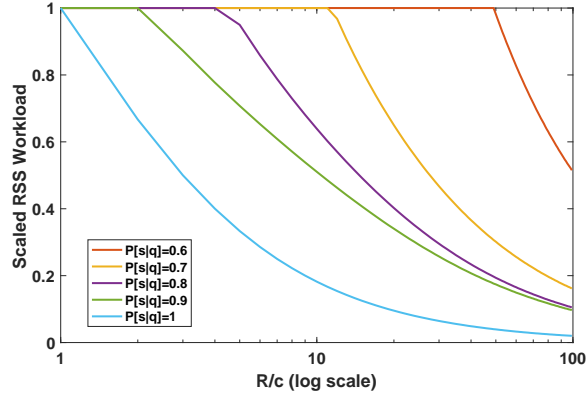}
	\caption{Scaled RSS workload as a function of $R/c$ (on a log scale) and for different values of $\Prx{s|q}$, and for the value of $\Prx{q}$ that minimizes scaled RSS workload.}
	\label{fig:saving}
\end{figure}

Second, we investigate the relationship between the number of students grading each assignment and the scaled RSS workload. Observe that the optimal ROS mechanism gives rise to the same TA workload regardless of the number of students, while TA workload does change with the number of students under the optimal RSS mechanism. Figure~\ref{fig:prior} shows scaled RSS workload as a function of number of students for a setting with a moderately skewed prior distribution ($\Prx{q=a}=0.8$), relatively accurate graders ($\conp{\firstsig}{\firstsig}=0.9$), and a big gap between rewards and costs ($\frac{R}{c}={25}$).\footnote{Other values of these parameters yield different quantitative results but the same qualitative trends.} Overall, as the number of students grading each assignment increases, the scaled RSS workload increases (and hence the RSS mechanism offers less benefit over the ROS mechanism). To make this pattern more concrete, consider that when three students grade each assignment and follow their dominant strategies, the optimal ROS mechanism has a TA workload of $0.5$ while the optimal RSS mechanism has a TA workload of $0.18$ (a scaled RSS workload of 0.36); whereas, when $10$ students grade each assignment, the ROS workload remains at $0.5$ while the TA workload of the RSS mechanism increases to $0.23$ (for a scaled RSS workload of 0.47). 
As the number of students grows, the scaled RSS workload approaches 0.58.

\begin{figure}[t]
	\centering
	\includegraphics[scale=0.38]{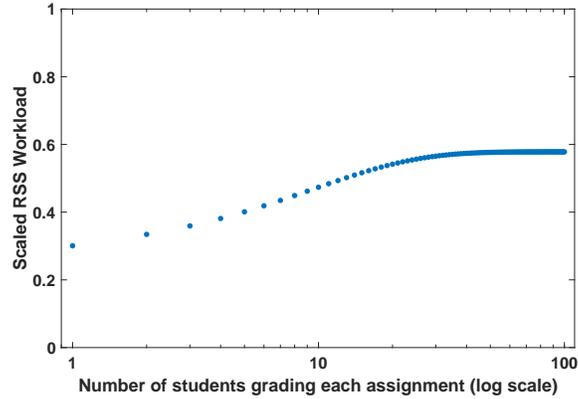}
	\caption{Scaled RSS Workload as a function of number of students.}
	\label{fig:prior}
\end{figure}


 
\section{\mbox{Heterogeneous Signal Distributions}}
\label{sec:hetero}
So far, our analysis has been limited to a model in which students' and TA's signals are drawn from the same signal distribution. 
In this section, we show that our main structural result, 
Theorem~$\ref{main}$, in fact extends to a model in which students' and TA's signals are from different distributions. 

More formally, suppose each student~$i$, by exerting effort at cost $c_i$, can examine the submission and observe a signal~$s_i \in Q$; conditioning on the submission's quality~$q$, each signal $\sig_i$ is drawn independently from a commonly known distribution~$\Prx{\sig_i|q}$. 
Also, the TA's signal~$\sigta$ is independently drawn from distribution $\Prx{\sigta|q}$.
Note that the key difference from the setting in the first few sections is that $\Prx{\sig_i | q}$ and $\Prx{\sigta|q}$ may all be different from each other.  

Let $\Vec{r}_{-i}$ be the vector of reports made by all students except~$i$; 
$\spotcheckpx^{\firstsig}_i(\vec r_{-i})$ 
denote the probability of spot checking student~$i$ when she reports $\firstsig$, while the other students report~$\vec r_{-i}$; 
similarly, $\spotcheckpx^{\secondsig}_i(\vec r_{-i})$ denotes the probability of spot checking student~$i$ when she reports~$\secondsig$.
\begin{theorem} \label{main1}
	\label{thm:opt-RSS1}
	If for any student $i$ and signal $\sig_i \in Q$, 
	\begin{align*}
	    \min_{l \in Q} \{\Prx{\sig_i=l \given \sigta = l} - \Prx{\sig_i =l}\} \geq \frac{c_i}{R}, 
	\end{align*}
	the PRSS mechanism with the following spot-checking probabilities is DSIC and minimizes the TA workload among all DSIC spot-checking mechanisms:
	\begin{align}
	\spotcheckpx^{\firstsig}_i (\Vec{r}_{-i}) &=  \frac{  \frac{c_i}{R} }{\Prx{\sig_i=\secondsig \given \sigta = \secondsig} - \Prx{\sig_i =\secondsig}} \ \ \ \forall \Vec{r}_{-i} \in Q^{n-1}  \label{opt11} \\
	\spotcheckpx^{\secondsig}_i (\Vec{r}_{-i}) &=  \frac{  \frac{c_i}{R}}{\Prx{\sig_i=\firstsig \given \sigta = \firstsig} - \Prx{\sig_i =\firstsig}} \ \ \ \forall  \Vec{r}_{-i} \in Q^{n-1}; \label{opt21}
	\end{align}
	if
	$\min_{l \in Q} \{\Prx{\sig_i=l \given \sigta = l} - \Prx{\sig_i =l}\} < \frac{c_i}{R} $ for any $i$ and~$l$,
	there exists no DSIC spot-checking mechanism.
\end{theorem}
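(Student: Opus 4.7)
The plan is to mirror the two-step proof of Theorem~\ref{main}, carefully tracking the heterogeneous costs and noise models. In Step~1, I would set up a relaxed convex program that keeps only the DSIC constraints saying student~$i$ prefers the truthful strategy over each lazy (no-effort) strategy, when all other students report some fixed $\Vec{r}_{-i}$ without effort; I then argue that the claimed spot-checking probabilities are optimal for this relaxation. In Step~2, I would verify that because the resulting mechanism is PRSS, the remaining DSIC constraints are automatically satisfied.

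For Step~1, fix a student~$i$ and a vector $\Vec{r}_{-i}$ of (effort-free and hence uninformative) reports by the other students. Expanding the truthful and lazy utilities and using $\Prx{\sigta = l} - \Prx{\sig_i = l, \sigta = l} = \Prx{\sig_i \neq l, \sigta = l}$, the two IC constraints reduce to
\begin{align*}
-\Prx{\sig_i=\secondsig, \sigta=\firstsig} \spotcheckpx^{\firstsig}_i(\Vec{r}_{-i}) + \Prx{\sig_i=\secondsig, \sigta=\secondsig} \spotcheckpx^{\secondsig}_i(\Vec{r}_{-i}) &\geq \frac{c_i}{R}, \\
\Prx{\sig_i=\firstsig, \sigta=\firstsig} \spotcheckpx^{\firstsig}_i(\Vec{r}_{-i}) - \Prx{\sig_i=\firstsig, \sigta=\secondsig} \spotcheckpx^{\secondsig}_i(\Vec{r}_{-i}) &\geq \frac{c_i}{R}.
\end{align*}
These constraints decouple across different $(i, \Vec{r}_{-i})$ pairs, so the overall workload-minimization problem splits into independent two-variable subproblems. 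By the same coefficient-sign argument used in Theorem~\ref{main}, the only local moves from the joint tight intersection that preserve feasibility simultaneously increase both $\spotcheckpx^{\firstsig}_i(\Vec{r}_{-i})$ and $\spotcheckpx^{\secondsig}_i(\Vec{r}_{-i})$, strictly raising the convex objective; hence both constraints bind at the optimum. Solving the $2 \times 2$ system and simplifying using $\Prx{\sig_i=l \given \sigta=l} = \Prx{\sig_i=l, \sigta=l}/\Prx{\sigta=l}$ recovers exactly \eqref{opt11} and~\eqref{opt21}; crucially, these values are independent of $\Vec{r}_{-i}$, and lie in $[0,1]$ precisely under the stated hypothesis.

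For Step~2, the $\Vec{r}_{-i}$-independence of the solution means the candidate mechanism is PRSS, so student~$i$'s expected utility depends only on her own strategy and the joint distribution of $(\sig_i, \sigta)$, not on what other students do. It therefore suffices to check the remaining DSIC conditions in the all-others-lazy scenario---namely, the ``report truthfully after observing the signal'' inequalities of the form $\Prx{\sig_i=\firstsig, \sigta=\firstsig} \spotcheckpx^{\firstsig}_i \geq \Prx{\sig_i=\firstsig, \sigta=\secondsig} \spotcheckpx^{\secondsig}_i$ and its analogue. These are strictly implied by the Step~1 constraints, which carry an additional $+c_i/R$ on the right-hand side, so DSIC holds throughout. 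For necessity, when $\Prx{\sig_i=l \given \sigta=l} - \Prx{\sig_i=l} < c_i/R$ for some $i$ and $l$, the tight intersection forces either $\spotcheckpx^{\firstsig}_i$ or $\spotcheckpx^{\secondsig}_i$ above~$1$, and since the feasible cone at that intersection only admits moves that simultaneously increase both variables, no point in $[0,1]^2$ satisfies both constraints, ruling out any DSIC mechanism.

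The main obstacle I anticipate is essentially notational: ensuring that the per-student, per-$\Vec{r}_{-i}$ decoupling truly survives in the heterogeneous setting. It does, because each student's pair of constraints involves only her own probabilities, her own cost $c_i$, and her own spot-checking variables, so the heterogeneous problem collapses to $n \cdot 2^{n-1}$ independent copies of essentially the same two-variable subproblem solved in Theorem~\ref{main}.
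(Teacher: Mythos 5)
Your proposal is correct and follows essentially the same route as the paper: relax to the all-others-lazy IC constraints, observe that the feasible region splits into independent two-variable pairs indexed by $(i,\Vec{r}_{-i})$ whose coefficient signs force both constraints to bind at the optimum of the convex workload objective, recover \eqref{opt11}--\eqref{opt21}, and then use the resulting $\Vec{r}_{-i}$-independence (PRSS structure) to discharge the remaining DSIC constraints and the infeasibility claim. The only cosmetic caveat is that the subproblems are independent in their constraints but still coupled in the objective through the $\max$ over students; as in the paper, the argument goes through because the objective is monotone in each variable, so the only feasible local moves (increasing both variables of a pair) cannot improve it.
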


\begin{proof} [Proof Sketch]
We first note that Eqs.~\eqref{eq:ica-lazy-opponents-simplified} and ~\eqref{eq:icb-lazy-opponents-simplified} in the proof of Theorem~\ref{main} are defined for each student $i$ and do not depend on the signal distribution of any student except student $i$. When the signal distribution for the TA and Students are different, the only change in~\eqref{eq:ica-lazy-opponents-simplified} and ~\eqref{eq:icb-lazy-opponents-simplified} is that instead having $n$ pair of such constraints and $2n$ variables, there are $2^{n-1}$ constraints and $2^n$ variables, with $\spotcheckpx^{\firstsig}_i (\Vec{r}_{-i})$ and $\spotcheckpx^{\secondsig}_i (\Vec{r}_{-i})$ as a pair for each~$\Vec{r}_{-i}$.  Let $\Vec{s}_{-i}$ be the signal observed by all students except student $i$. The objective function in minimization problem defined in the proof of Theorem~\ref{main} for minimizing the TA workload subject to \eqref{eq:ica-lazy-opponents-simplified} and~\eqref{eq:icb-lazy-opponents-simplified} can be redefined as follows: 
	\begin{multline*}
		\min_{\spotcheckpx^{\firstsig}_i, \spotcheckpx^{\secondsig}_i} \sum_{\Vec{s} \in Q^n} \sum_{t \in Q} \trueqp{t} \cdot
		 \prod_{j \in N} \Prx{s_j=l_j|q=t}  \max_{i} \{\spotcheckpx^{\sig_i}_i (\Vec{s}_{-i})\} 
	\end{multline*}
	with $0 \leq \spotcheckpx^{\firstsig}_i (\Vec{r}_{-i}), \spotcheckpx^{\secondsig}_i (\Vec{r}_{-i}) \leq 1$ for all $\Vec{r}_{-i} \in Q^{n-1}$ and $i$, 
which is again a convex function due to the presence of the $\max$ functions. Since the constraints are again linear and with the same structure, the same argument about the optimally of the intersection of the constraints can be applied here. This means that for each $\Vec{r}^{-i} \in Q^{n-1}$, the intersection of the corresponding pairs of constraints defined similar to \eqref{eq:ica-lazy-opponents-simplified} and~\eqref{eq:icb-lazy-opponents-simplified}, which are expressed by~\eqref{opt11} and~\eqref{opt21}, give optimal solution to the above optimization problem. The arguments in the Step 2 of the proof of Theorem~\ref{main} are also directly applicable here. This completes the proof. 
\end{proof}
 
 \section{Conclusion} 
 \label{sec:conclusion}


We have investigated peer grading mechanisms that achieve dominant strategy incentive compatibility by using TAs to spot check students, and have focused on minimizing the required TA workload. We have explored mechanisms for \emph{report-sensitive spot checking}: varying spot-checking probabilities based on the profile of grades that all students report for a given assignment. We proposed a simple optimal DSIC ``PRSS'' mechanism, and showed that it minimizes the required spot-checking budget (across both the ``RSS'' mechanisms and the more constrained ``RSUS'' mechanisms even under a weaker solution concept) and outperforms the (``ROS'') mechanisms that spot checks all students with a fixed, report-oblivious probability. We evaluated the performance of the optimal DSIC RSS mechanism both analytically and empirically.  Finally, we extended our results to a setting which allowed each student and TA to observe the ground truth according to a different noise model.

We consider the most important direction for future work to be generalizing our results beyond two signals. We note that this would require a fundamentally different proof technique, as our convex programming formulation depends critically on the problem's two-signal structure. We expect that the multi-signal setting would also require other variations in the model. Notably, in such domains it becomes natural to impose an ordering over the signals and to reward agents according to the distance between their reports and that of the TA, rather than rewarding all ``correct'' reports equally.
Another limitation of our work is the assumption that the prior distribution over signals is known to the mechanism designer. The derivation of prior-independent report-sensitive mechanisms is a second, worthwhile direction for future work. One possible strategy for building such mechanisms could be learning the prior in a repeated setting.

 
 \bibliographystyle{ACM-Reference-Format}
 \bibliography{aamas20}

\end{document}